\documentclass[conference]{IEEEtran}

\usepackage[utf8]{inputenc}
\usepackage{cite}
\usepackage{amsmath,amssymb,amsfonts}
\usepackage{graphicx}
\usepackage{mathtools}
\usepackage{inconsolata}
\usepackage{textcomp}
\usepackage{xcolor}
\usepackage{xspace}
\usepackage{booktabs}
\usepackage{comment}
\usepackage[hidelinks]{hyperref}
\def\BibTeX{{\rm B\kern-.05em{\sc i\kern-.025em b}\kern-.08em
    T\kern-.1667em\lower.7ex\hbox{E}\kern-.125emX}}

\usepackage{setspace}
\usepackage{listings}

\usepackage{amsthm}

\newtheorem{property}{Property}[section]

\newtheorem{lemma}{Lemma}[section]

\newtheorem{theorem}{Theorem}

\newcommand{\subparagraph}{}

\usepackage[ruled]{algorithm2e}
\SetKwProg{Fn}{function}{}{end}
\SetKwProg{Init}{initialization}{}{}
\SetKwInput{Input}{input}
\SetKw{KwTo}{in}
\SetKw{KwWhere}{where}
\lstset{basicstyle=\ttfamily,breaklines=true}

\newcommand{\squeezeup}{\vspace{-5mm}}


\DeclareMathOperator*{\argmax}{arg\,max}

\setlength{\abovedisplayskip}{1pt}
\setlength{\belowdisplayskip}{1pt}

\newcommand{\ie}{i.e.,\xspace}
\newcommand{\eg}{e.g.,\xspace}

\newcommand{\eat}[1]{}






\definecolor{darkgreen}{rgb}{0.0, 0.5, 0.0}

\newcommand\code[1]{{\small\lstinline$#1$}}

\newcommand\Reals{\mathbb{R}}
\newcommand{\norm}[1]{\left\lVert#1\right\rVert}

\def\simdex{\textsc{Maximus}\xspace}
\def\opt{\textsc{Optimus}\xspace}
\def\simdexi{\simdex}
\newcommand\sbi{\simdex}
\def\simdexb{blocked matrix multiply\xspace}
\def\U{\ensuremath{U}\xspace}
\def\I{\ensuremath{I}\xspace}
\def\u{\ensuremath{\mathbf{u}}\xspace}
\def\j{\mathbf{j}\xspace}
\def\c{\mathbf{c}}
\def\topK{top-$K$\xspace}
\def\topk{\topK}
\newcommand{\papername}{To Index or Not to Index:\\Optimizing Exact Maximum Inner Product Search}



\newcommand{\minihead}[1]{{\vspace{.45em}\noindent\textbf{#1.} }}
\newcommand{\miniheadd}[1]{{\vspace{.45em}\noindent\textbf{#1} }}
\newcommand{\miniheadit}[1]{{\vspace{.45em}\noindent\textit{#1.} }}

\frenchspacing
\begin{document}

\author{\IEEEauthorblockN{Firas Abuzaid, Geet Sethi, Peter Bailis, Matei Zaharia}
\IEEEauthorblockA{\textit{Stanford DAWN Project}}
}

\title{\papername}
\maketitle

\begin{abstract}
Exact Maximum Inner Product Search (MIPS) is an important task that is widely
pertinent to recommender systems and high-dimensional similarity search. The
brute-force approach to solving exact MIPS is computationally expensive, thus
spurring recent development of novel indexes and pruning techniques for this
task. In this paper, we show that a hardware-efficient brute-force approach,
blocked matrix multiply (BMM), can outperform the state-of-the-art MIPS
solvers by over an order of magnitude, for some---but not all---inputs.

In this paper we also present a novel MIPS solution, \simdex, that takes
  advantage of hardware efficiency {\em and} pruning of the search space. Like
  BMM, \simdex is faster than other solvers by up to an order of magnitude, but
  again only for some inputs. Since no single solution offers the best runtime
  performance for {\em all} inputs, we introduce a new data-dependent
  optimizer, \opt, that selects online with minimal overhead the best MIPS
  solver for a given input. Together, \opt and \simdex outperform
  state-of-the-art MIPS solvers by 3.2$\times$ on average, and up to
  10.9$\times$, on widely studied MIPS datasets.
\end{abstract}

\section{Introduction}
\label{sec:intro}



Over the last decade, the Maximum Inner Product Search (MIPS)
problem has increasingly become more important in machine learning and large-scale
data systems. The classical MIPS setup is as follows: given a vector $\u \in
\Reals^f$ and a second set of vectors $I$, where each $\mathbf{i} \in I$ is
also in $\Reals^f$, find the vector $\mathbf{i}$ that maximizes the inner
product $\u^T\mathbf{i}$.  The problem generalizes even further, to finding the
top $K$ vectors in $I$ that maximize $\u$.

MIPS has attracted interest because it is relevant to many real-world
applications. For example, recommender systems---which are deployed in
e-commerce~\cite{linden2003amazon}, social networking~\cite{facebookblog}, and
media applications~\cite{koren2009matrix}---are often based on latent factor
modeling using matrix factorization
(MF)~\cite{wang2015collaborative,van2013deep}. In an MF model, each user is
associated with a vector $\u$, and each item (\eg a movie or song) is
associated with a vector $\mathbf{i}$. The predicted rating for the item by the user is
modeled as $\u^T\mathbf{i}$; therefore, solving MIPS in the context of an MF model
effectively computes the \topK recommendations for a given user. Similarly,
MIPS can be applied to high-dimensional similarity search~\cite{glove} and
multi-class prediction tasks~\cite{imagenet}, too.


For exact MIPS in the batch setting (\ie compute \topK for all users at once),
the na\"ive approach fails to scale to today's data volumes, and two indexing
techniques recently developed by the database
community---LEMP~\cite{teflioudi2015lemp} and
FEXIPRO~\cite{fexipro}---have proven to be the most efficient MIPS solvers and
currently lead in terms of performance. The efficiency of these indexes is
predicated on their ability to \textit{i)} prune the search space of candidate
items, thus limiting the number of inner products computed, and \textit{ii)}
improve the efficiency of computing the inner products themselves.

However, in this paper, we show that both of these indexes do not always
outperform a \emph{hardware-efficient} brute-force approach: a dense
matrix multiply of the user and item matrices using optimized, cache-efficient Basic Linear
Algebra Subprogram (BLAS) libraries~\cite{blas,intel-mkl,openblas}. This approach, which we refer
to as blocked matrix multiply (BMM) throughout this paper, has been applied to
other areas of machine learning (such as model
search~\cite{sparks2015automating}) and is surprisingly competitive with the
state of the art.
Prior indexing techniques for MIPS have not taken into account hardware
optimizations that could be adopted by brute-force methods; in fact, they sometimes
actually lose to the brute-force ones.

To illustrate this point, we trained 20 MF models
on three gold-standard benchmark datasets and found that brute-force
computation can outperform indexing on many of them, including several of the
most accurate models.  Figure~\ref{fig:appetizer_no_simdex} shows two examples.
For the recommendation model trained on the Netflix Prize dataset (left), BMM
outperforms the LEMP and FEXIPRO
indexes by a factor of 1.9--3.1$\times$. However, for the Yahoo R2 MF model
(right), LEMP and FEXIPRO are 2-3.5$\times$ faster than BMM.


\begin{figure}[t]
  \includegraphics[width=\columnwidth]{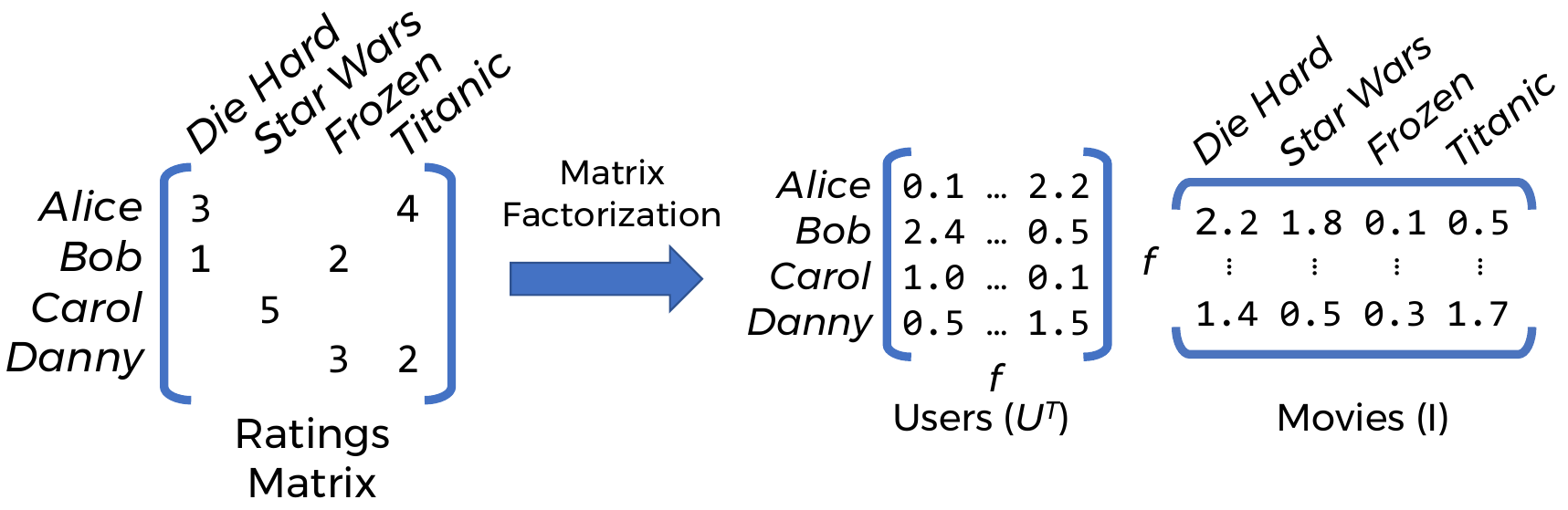}
  \squeezeup
  \caption{Example of a matrix factorization (MF) model, a common setting
  where MIPS is applied to find the \topK items in $\I$ (\ie movies) for each
  user $\u \in U$.  In this paper, we focus on accelerating the end-to-end MIPS
  runtime for all users in $U$.}
  \label{fig:matrix_factorization}
  \squeezeup
\end{figure}

Motivated by these experiments and others in Section~\ref{sec:evaluation}, we
propose a new, hardware-optimized index called \simdex, which captures the
hardware efficiency of BMM while also pruning the search space, \`a la LEMP and
FEXIPRO.  \simdex utilizes a combination of clustering, blocking, and linear
algebra primitives from hardware-efficient software packages~\cite{armadillo,
intel-mkl} to achieve high performance while still being easy to implement.
\simdex first clusters users based on their weight vectors using $k$-means,
then uses the cluster centroids to produce a sorted list of preferred
items for each cluster. The cluster's sorted list approximates the user's
preferred items for every user assigned to the cluster. Then, rather than find
each user's top $K$ items one at a time, \simdex blocks the traversal of the
sorted item lists for multiple users at a time, and applies a bound derived by
Koenigstein et al.~\cite{koenigstein2012efficient} to correct for the
approximation. By combining hardware-efficient blocking with this pruning
technique, \simdex outperforms LEMP and FEXIPRO by 1.78$\times$ on average, and
up to 10.6$\times$.

Despite the improvements that \simdex gives us, it still does not beat BMM on
all inputs. For exact MIPS, there is no single clear winner; a data-dependent
strategy is needed. Therefore, we also propose \opt, a MIPS serving optimizer
that automatically selects an efficient solving strategy for a particular
input.  To enable \opt to efficiently select the best strategy, we exploit two
key observations. First, for the current MIPS indexes, index traversal is, in
fact, much more expensive than index construction. For example, we show that the runtime to compute
even the top $K=1$ recommendation for all users can be 90$\times$
greater than the runtime to build an index. Thus, we can cheaply construct an index to test
its efficacy. Second, the performance of BMM and these MIPS indexes
can be determined by measuring the runtime on a small number of users, then
accurately extrapolating the performance on the entire model. This inspires a
sampling-based approach, in which we construct a MIPS index and then directly
compare its performance to BMM's on a small sample of users. By
applying an incremental t-test during this comparison, we can, for certain
inputs, determine the best serving strategy without exploring the full sample,
thereby reducing the runtime overhead. Overall, this online tuning step incurs
low overhead (5.5\% on average) while delivering high accuracy and large
speedups across four different index types (up to 6$\times$) that perform
within 12\% of an oracle-based optimizer with no overhead. These speedups are
especially surprising given the highly-optimized nature of each index---for
example, in their recent paper, LEMP improved performance over the previous
best alternative exact index by up to 24$\times$~\cite{teflioudi2015lemp}.

\begin{figure}[t]
  \includegraphics[width=\linewidth]{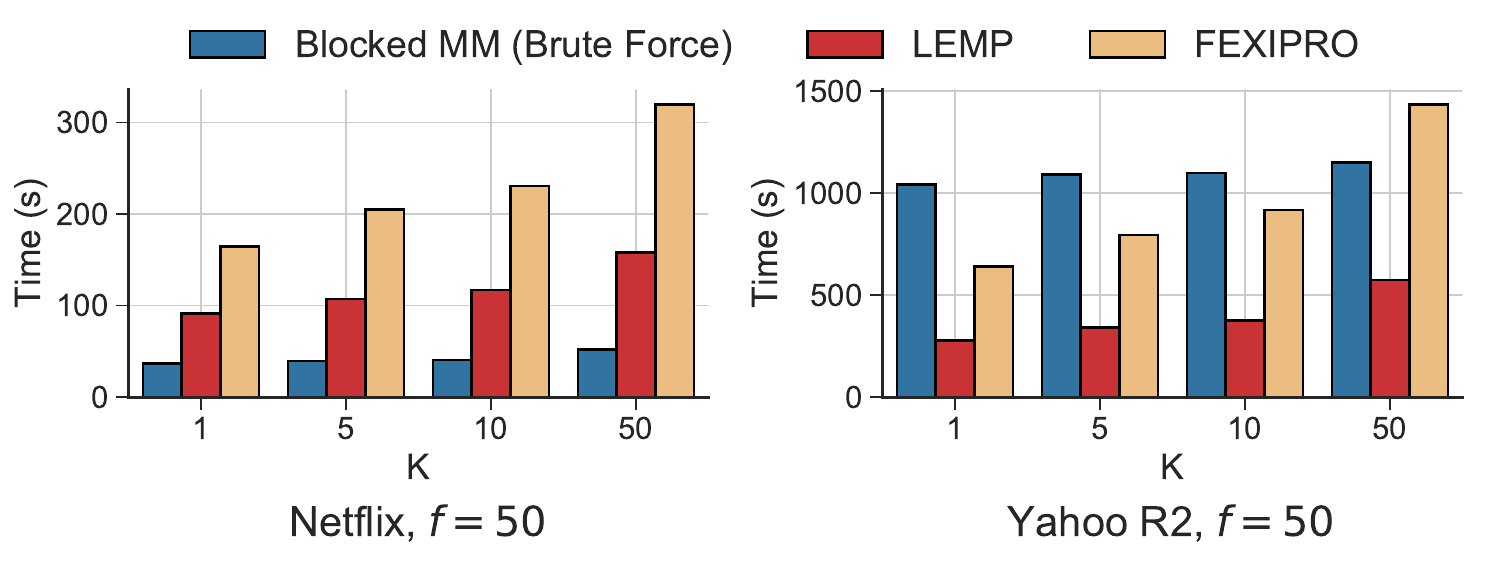}
  \squeezeup
  \caption{Experiment comparing LEMP~\cite{teflioudi2015lemp} and
  FEXIPRO~\cite{fexipro}, two state-of-the-art MIPS indexes, against a
  hardware-efficient brute-force approach, blocked matrix multiply. We
  measure the end-to-end \topK runtime for all users on two MF models from the
  Netflix Prize~\cite{netflix} and the Yahoo R2~\cite{r2} datasets.  For
  Netflix, blocked MM is surprisingly fastest; for Yahoo R2, LEMP and FEXIPRO outperform it. In this paper, we
  develop \opt, an optimizer that chooses online the best MIPS serving strategy
  for a target model.}
  \squeezeup
  \label{fig:appetizer_no_simdex}
\end{figure}

We show that combining \opt and \simdex yields a 3.2$\times$ speedup
on average (and up to 10.9$\times$) compared to the state-of-the-art
indexes alone. \opt and \simdex thus provide a pragmatic and
easy-to-implement but difficult-to-beat baseline for MIPS across
a wide range of inputs and query settings.

\section{Background}
\label{sec:background}

In this section, we provide background regarding the \topK MIPS problem, the
surprising competitiveness of blocked matrix multiply (BMM), and the current
state-of-the-art approaches to solving MIPS.

\eat{
\subsection{Matrix Factorization}

Collaborative filtering (CF) methods are an extremely popular method for
building recommender systems. They utilize user feedback on items (either
explicit or implicit) to infer relations between users and between items, and
ultimately relate users to items they like. Since the advent of the Netflix
Prize~\cite{bell2007lessons}, the KDD-Cup 2011~\cite{koenigstein2011yahoo}, and
other similar contests, CF models have become widely popular in recommender
systems, largely due to their high accuracy across a variety of datasets.

One of the most common examples of collaborative filtering is the matrix
factorization (MF) model~\cite{zhou2008large}.  Given a partially filled rating
matrix $\mathcal{D} \in \Reals^{|U| \times |I|}$, where $U$ is a set of users
and $I$ is a set of items for which we wish to compute recommendations, an MF
model will factorize $\mathcal{D}$ into two matrices of low rank: $\mathcal{U}
\in \Reals^{|U| \times f}$ for users, and $\mathcal{I} \in \Reals^{|I| \times
f}$ for items, where $f \ll \min(|U|,|I|)$.  The rank of each matrix $f$ is
called the number of \emph{latent factors} of the model.  Once factorized, the
model predicts the unknown entries in $\mathcal{D}$ by computing the matrix
product $\mathcal{U}\mathcal{I}^T$.  There are various techniques for training
MF models: typically, an objective function is defined on the prediction error
on the known ratings in $\mathcal{D}$.  This function is then minimized by an
optimization algorithm, such as Stochastic Gradient Descent (SGD) or
Alternating Least Squares (ALS).  Both of these algorithms are highly
parallelizable~\cite{recht2013parallel, gemulla2011large, zhou2008large,
koren2009matrix, yun2014nomad}.

In the trained model, each individual user is modeled by a vector $\u \in
\Reals^f$ and each item as a vector $\i \in \Reals^f$. A user $u$'s predicted
rating for an item $i$ can be computed by taking the inner product between
these vectors: $r_{ui} = \u^T \i$.

Unfortunately, na\"ively computing ratings for each user-item pair to
find the \topK predictions is expensive, requiring time proportional
to $|U|\times|I|$.  In contrast, the training process is only
proportional to the number of known ratings, which is typically a
multiple order of magnitude smaller than the full matrix product
$\mathcal{U}\mathcal{I}^T$. As we discuss in the next section, we
focus on optimizing this rating computation.
}

\subsection{Problem Statement: Batch Exact MIPS}
Let $U$ be a set of user vectors $\u \in \Reals^f$, and let $I$ be a set of
item vectors $\mathbf{i} \in \Reals^f$, where $f$ is referred to as the number of
\emph{latent factors}.  For a single vector \u, the Maximum Inner Product
Search problem is defined as

$$
  r_{ui} = \argmax_{\mathbf{i} \in \I}\ \u^T\mathbf{i}
$$

In this paper, we focus on the \emph{batch} \topK MIPS
problem~\cite{teflioudi2015lemp,shrivastava2014asymmetric,ram2012maximum},
where we wish to minimize the end-to-end runtime of finding the top
$K$ items that maximize the inner product for all $\u \in U$.
However, \simdexi, our proposed index, can also accelerate MIPS for
a subset of users at a time, as might happen in a model serving system like
Clipper~\cite{crankshaw2017clipper} that collects tens of requests at once.





\miniheadd{Why Exact MIPS?} 
While a range of existing techniques (Section~\ref{sec:related_work}) provide
solutions to the \emph{approximate} \topK retrieval problem, in this work, we
are interested in evaluating MIPS quickly, subject to serving the \emph{most
accurate} results available. For some domains,
especially revenue-critical applications that demand high-accuracy
recommendations (\eg e-commerce and advertising~\cite{schafer1999recommender}),
approximate techniques cannot be applied and exact MIPS is instead required.
(Notably, the margin of accuracy separating the first- and second-place teams in
the 2009 Netflix Prize challenge was only 0.01\%~\cite{bell2007lessons}.)

\subsection{Blocked Matrix Multiply}
The brute-force approach to MIPS is straightforward: for each user vector and
each item vector, compute the inner product between the two. Once the ratings
for all user-item pairs have been computed, select the top $K$ items for each
user (\eg using a min-heap).

Each inner product can be computed using \code{sdot}, a standard BLAS library
function~\cite{intel-mkl,openblas}. However, instead of repeatedly
calling \code{sdot} in a double for-loop over the user and item vectors, we can
replace the entire computation with a single matrix-matrix multiply between the
users matrix and the items matrix (\eg using the \code{sgemm} function in BLAS),
thereby blocking the entire computation of user-item ratings. We refer to this
approach as blocked matrix multiply (BMM) throughout this paper.



In theory, replacing inner products with a single matrix-matrix
multiplication does not change the runtime complexity of computing all
user-item ratings; both are brute-force. Thus, if an index filters out a
significant fraction of items, it should be faster. Therefore, why is
blocked matrix multiply often faster than indexes in practice?

The reason for this result is that, due to their popularity in numerical
workloads, matrix-matrix multiply kernels are highly optimized for modern hardware.
Specifically, modern linear algebra kernels will perform advanced data layout
and blocking to maximize cache utilization, and, when available (\eg on
modern server-class processors), will aggressively vectorize code using SIMD
instructions. These optimizations are applied automatically and transparently
to the end programmer in BLAS libraries. By not designing for
hardware efficiency, state-of-the-art MIPS indexes do not fully take advantage
of these benefits found in modern hardware.

Thus, when we perform blocked matrix multiply using one of many implementations
of linear algebra kernels, we benefit from decades of algorithmic and
hardware-specific optimizations that yield substantial empirical speedups over
na\"ive inner products ($40\times$) or even matrix-vector
multiply ($20\times$). 
These speedups act as an effective ``constant factor'' improvement in
runtime---they do not change the asymptotic performance, but, especially when
evaluating MIPS on reference inputs, these constants have a significant effect
on hardware efficiency.

\eat{
An alternative approach to computing the top $K$ that provides an
intermediate form of blocking would be a matrix-vector multiply between a
single user vector and the items matrix (\ie \code{sgemv} in BLAS). In effect,
this batches the item vectors, but not the users. For smaller models, this
approach is effective enough to meet the desired latency for some real-world
applications.  However, batching users can still provide efficiency wins even
for these small models: on Netflix, with 500K users, 20K items, and $f=100$, we
measured a 20$\times$ improvement in the per-user query time between blocked
matrix multiply and matrix-vector multiply.  This efficiency gap increases up
to 40$\times$ for our largest model GloVe-Twitter, with 100K users, 1M items,
and $f=200$.  Thus batching users is important to consider for computing MIPS
at scale, especially as more real-world applications move to even larger
models~\cite{,facebookblog}, and even online serving systems employ
batching to optimize inference~\cite{crankshaw2017clipper}.
}

\subsection{Existing MIPS Indexes}
\label{sec:background-indexes}
MIPS is a topic of active research
within the database community. In this section, we give an overview of prior
approaches used to solve MIPS, which also provides necessary context for the
remainder of this paper.

\minihead{User Clustering}
One of the first novel solutions to the MIPS problem was suggested by
Koenigstein et al.~\cite{koenigstein2012efficient}, who proposed
to cluster users via spherical clustering to perform \emph{approximate}
\topK queries. Effectively, they used the cluster centroids
to serve as an approximation of the users' preferences. To measure the approximation
error, they derived a bound on the distortion of ratings based on the angle
between the user vector and its assigned centroid. Here, we re-derive
the bound (Equation 13 in~\cite{koenigstein2012efficient}) as
Equation~\ref{eq:upper_bound} below and illustrate how to use this bound to
approximate \topK scoring.  (Note: This bound assumes that the user and item
vectors reside in a metric space, a safe assumption for exact MIPS.)

Suppose we have a user vector $\u$, an item vector $\mathbf{i}$, and a centroid vector
$\c$ that we have obtained from clustering the user vectors (\ie $\u$ is
assigned to the cluster represented by $\c$).  Let $\theta_{ui}$ be the angle
between $\u$ and $\mathbf{i}$, $\theta_{ic}$ be the angle between $\mathbf{i}$ and $\c$, and
$\theta_{uc}$ be the angle between $\u$ and $\c$. Finally, let $r_{ui}$ be the
rating for the user-item pair $\u,\mathbf{i}$.

By the triangle inequality on angular distances in metric spaces, we have that

$$
|\theta_{ic} - \theta_{uc}| \leq \theta_{ui} \leq \theta_{ic} + \theta_{uc}.
$$

\noindent Thus, we can compute an upper bound on $r_{ui}$:

$$
r_{ui} = \u^T \mathbf{i} = \norm{\u} \norm{\mathbf{i}} \cos(\theta_{ui}) \leq \norm{\u} \norm{\mathbf{i}} \max_{\theta \in [\theta_{ic} \pm \theta_{uc}]}\cos \ \theta.
$$

For \topK, each user's ratings are invariant under linear
scaling. Thus, we can omit $\norm{u}$ to preserve the \emph{relative}
ordering of items for a single user and obtain a linear scaling of
$r_{ui}$, denoted $r_{ui}^*$ such that $r_{ui}^*\norm{\u} = r_{ui}$:

\begin{equation}
\label{eq:max}
  r_{ui}^* \leq \norm{\mathbf{i}} \max_{\theta \in [\theta_{ic} \pm \theta_{uc}]} \cos \ \theta.
\end{equation}

Since $\cos^{-1}(x) \in [0, \pi]$, we have two cases to consider. If
$\theta_{uc} < \theta_{ic}$, then $\theta_{uc} < \pi$ and so
Equation~\ref{eq:max} is maximized at $\theta_{ic} - \theta_{uc}$. If
$\theta_{uc} \geq \theta_{ic}$, then Equation~\ref{eq:max} is
maximized at the origin. Rewriting Equation~\ref{eq:max}, we have:

\begin{equation}
  r_{ui}^* \leq
  \begin{cases}
    \norm{\mathbf{i}} \cos(\theta_{ic} - \theta_{uc}) & \text{if } \theta_{uc} < \theta_{ic} \\
      \norm{\mathbf{i}} & \text{otherwise}. \\
  \end{cases}
  \label{eq:upper_bound}
\end{equation}

Equation~\ref{eq:upper_bound} provides a means of monotonically ranking items
according to their angular distance from the cluster center. Later on, in
Section~\ref{sec:index}, we extend this inequality to perform \emph{exact}
queries using our proposed index, \simdexi.


\minihead{LEMP}
In SIGMOD 2015~\cite{teflioudi2015lemp} and TODS
2016~\cite{teflioudi2016lempTODS}, Teflioudi et al. introduced the LEMP index,
which empirically outperformed all prior approaches. LEMP solves the
MIPS problem using a divide-and-conquer approach: first, it sorts the item
vectors by length and partitions them into buckets, such that each bucket
contains vectors of roughly equal magnitude.
For each bucket, LEMP computes a set of candidate items for the top $K$ by solving a
smaller cosine similarity search problem; depending on the distribution of
vector lengths, LEMP can choose one of several possible algorithms to retrieve
these candidates, such as
length-based pruning, angular-based pruning, incremental pruning (which
leverages partial inner products and the Cauchy-Schwarz inequality), or na\"ive
search (\ie inner products). Crucially, LEMP does not consider blocked matrix
multiply when scoring multiple users. LEMP chooses the retrieval algorithm by
testing each method on a sample of user vectors.  Once each bucket has been
examined, LEMP merges the results by computing the ratings for each of the
candidate item vectors using inner products and then finally selects the top $K$
items. In an extensive study appearing in TODS 2016, LEMP was shown to
outperform exact indexing alternatives by up to 24$\times$.

\eat{
\minihead{FEXIPRO} In SIGMOD 2017~\cite{fexipro}, Li et al. introduced the
FEXIPRO index. FEXIPRO combines several pruning techniques to reduce
computation during predictions for exact top $K$. First, FEXIPRO computes the
thin SVD on the item matrix and uses the output to apply lossless
transformation to the user and item matrices. This transformation places larger
weights in the first few dimensions of each vector, thus improving on the
incremental pruning techniques first suggested by LEMP.  Second, FEXIPRO
applies integer-based quantization to enable more efficient integer-based CPU
instructions for inner product computations. Third, FEXIPRO applies a final
transformation that ensures the inner product will always increase
monotonically by removing negative weights.  Combined, these optimizations yielded
substantial speedups over LEMP \emph{in the point query setting}, in which one
user vector is queried at a time.
}

We include LEMP as well as FEXIPRO~\cite{fexipro} in our study as examples of
state of the art in exact MIPS. Like LEMP, FEXIPRO does not consider blocked
matrix multiply when scoring multiple users. To maximize the generality of our
results, we instantiate our proposed optimization framework for both indexes
(and \simdexi, which we propose), provide a head-to-head comparison of these
indexes using the authors' implementations, and determine the
benefits of adaptive optimization for each in Section~\ref{sec:evaluation}.


\section{A Simple, Hardware-Friendly Index}
\label{sec:index}

Our observation that BMM is competitive with state-of-the-art MIPS indexes on
some---but not all---inputs, provides us with a number of lessons:
\begin{itemize}
\item The optimal choice of MIPS serving strategy is highly variable and
  input-dependent.
\item Hardware efficiency is critical in achieving maximum MIPS performance.
\item We can leverage commodity hardware-optimized
  kernels to improve serving efficiency with limited effort.
\end{itemize}

Based on these insights, we develop a new index that relies heavily on
commodity kernels for hardware-efficient operation, while also pruning the
search space of candidate items. Our goal in developing this index is two-fold.
First, we seek to synthesize a new algorithm that could constructively utilize
the above lessons, combining them with existing ideas from the indexing
literature. Second, we seek to develop a simple-to-understand and
simple-to-implement but efficient index that could serve as a baseline both for
our own empirical comparisons and for others to compare against in the future.
The resulting index, which we call \simdex, relies heavily on commodity
analytics kernels and is implementable in just a few lines of pseudocode, yet
is competitive with (and often exceeds the performance of) author-provided
implementations of LEMP and FEXIPRO.

The remainder of this section introduces \sbi in detail. We first begin with an
overview of the \sbi construction and querying techniques. Subsequently, we
discuss the correctness of this index as well as practical implementation
details and performance considerations.

\begin{figure}[t!]
  \centering
  \includegraphics[width=\columnwidth]{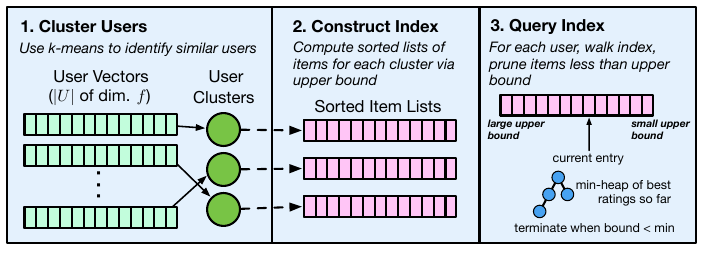}
  \squeezeup
  \caption{Overview of clustering (Section~\ref{sec:clustering}), index
  construction (Section~\ref{sec:construction}), and querying
  (\ref{sec:exact_top_k}) in \simdex.}
  \label{fig:simdex_overview}
  \squeezeup
\end{figure}

\begin{enumerate}
\item \textbf{Cluster Users}: \simdex clusters users into representative
  centroids that will approximate the users' preferences.

\item \textbf{Construct and Query Index}: \simdex computes a
  conservative estimate of the maximum distortion between each
  cluster's predicted rating and the predicted ratings for users in
  the cluster. \simdex subsequently uses this conservative upper bound
  to create a sorted list of items for each cluster.

\item\textbf{Walk Index}: To compute each user's \topK items,
  \simdex walks the item list of the user's corresponding cluster,
  terminating when the previous bound implies there are no
    higher-ranked items to explore.
\end{enumerate}

Figure~\ref{fig:simdex_overview} illustrates these steps, which we proceed to
describe in detail. We first describe \simdex's clustering strategy
(Section~\ref{sec:clustering}). We then show how \simdex uses the cluster
centroids to construct a prediction index (Section~\ref{sec:construction}) and
can subsequently prune item vectors during exact \topk computation
(Section~\ref{sec:exact_top_k}). We conclude with a discussion of optimizations
for performance and runtime analysis (Section~\ref{sec:hw}). Algorithm
\ref{alg:mz_special} provides the entire pseudocode for the core \simdex
routines.

\subsection{Clustering Users}
\label{sec:clustering}
\simdex first partitions the set of users $U \in \Reals^{f}$ into a set of $C$
clusters, with $C \ll |U|$. We will use a user $\u$'s assigned cluster centroid
$\c$ as a means of finding an initial approximation of $\u$'s \topK items,
which we then can iteratively refine per user to find the exact top $K$.

\minihead{Choosing Clusters} As we previously illustrated in
Equation~\ref{eq:upper_bound}, if we wish to provide an upper bound on the true
rating $r_{ui}$ using a centroid $\c$ instead of $\u$, then the quality of our
bound will be determined by the angle between $\c$ and $\u$: the smaller
$\theta_{uc}$ is, the tighter the approximation will be. The tighter the
approximation, the more items we can potentially prune when we query the index
for a user's top $K$. Therefore, we need to choose a clustering algorithm that
ultimately minimizes $\theta_{uc}$.

As prior work has shown~\cite{koenigstein2012efficient}, the ideal clustering
algorithm to minimize $\theta_{uc}$ is spherical clustering, which projects the
centroids onto the unit sphere per iteration and minimizes cosine dissimilarity
rather than Euclidean distance.  However, we found in our experiments that
standard $k$-means reasonably approximates the target goal of minimizing
angular distance, while also being computationally faster than spherical
clustering.  Because $k$-means is a common primitive, there are a number of
widely available, hardware-efficient implementations. In our experiments, we
found that, on average, the set of $\theta_{uc}$s generated by $k$-means were
only 7\% greater than the same set generated by spherical clustering, while
$k$-means was 2-3$\times$ faster, leading to end-to-end runtime improvements of 5-10\%.
Therefore, we use $k$-means for user clustering in \simdexi, and report only
results using $k$-means in Section~\ref{sec:evaluation}.

\subsection{Constructing an Index}
\label{sec:construction}


 \begin{algorithm}[t]
   \scriptsize
   \setstretch{1}
   \DontPrintSemicolon
   \SetAlgoLined
   \SetAlgoNoEnd
   \SetKwFunction{FDot}{Dot}
   \SetKwFunction{FUpdate}{Update}
 \SetArgSty{textup}

 \newcommand\mycommfont[1]{\rmfamily{#1}}
 \SetCommentSty{mycommfont}
 \SetKwComment{Comment}{$\triangleright$ }{}

   \caption{\simdex Index Construction and Querying}

   \SetKwFunction{FBound}{CBound}
   \label{alg:mz_special}
     \Input{\# clusters $n$, users $U$, items $I$}\vspace{.5em}
     \Init{}{
 	  $L \gets$ empty associative array of per-cluster sorted items
         }\vspace{.5em}
     \Fn{\FBound{\textrm{vector $\c$, vector $\mathbf{i}$, angle $\theta_b$}}}{
       \KwRet{$\norm{\mathbf{i}}\cos(\theta_{ic} - \theta_{b})$ \textbf{if} $\theta_{b} < \theta_{ic}$ \textbf{else} $\norm{\mathbf{i}}$} \Comment*[f]{Eqn.~\ref{eq:appx_bound}}}\vspace{.5em}
     \SetKwFunction{FBuild}{ConstructIndex}
     \Fn{\FBuild{}}{
       cluster $U$ into clusters $\mathcal{C}=\{C_1, \dots, C_n\}$ using $k$-means       \;
       \For{cluster $C_j \in \mathcal{C}$ having cluster centroid $\c_j$}{
         $\theta_{bj} \gets \max_{\u \in C_j} \cos^{-1}\left(\frac{\u^T \c_j}{\norm{\u}\norm{\c_j}}\right)$ \Comment*[f]{{max $\theta_{uc}$ for $C_j$}}

         $L[C_j] \gets $ sort $\mathbf{i} \in I$ by \FBound($\c_j$, $\mathbf{i}$, $\theta_{bj}$) descending
         }
       }\vspace{.5em}
       \SetKwFunction{FQuery}{QueryIndex}
       \Fn{\FQuery{\textrm{user $\u$, \topK threshold $K$}}}{
         $H$: min-heap of maximum size $K$ $\gets L[C_j][1:K]$\\\hspace{3mm} s.t. $\u \in $ cluster $C_j$ and each $\mathbf{i} \in H$ is weighted by $\u^T \mathbf{i}$\;
         \For{$\mathbf{i} \in L[C_j][K+1{:}]$}{
           \uIf{\FBound($\c_j$, $\mathbf{i}$, $\theta_{bj}$) $<$ $\min(H)$}{
             \textbf{break}
             }
           \uElseIf{$\u^T\mathbf{i} > \min(H)$}{
             add $\mathbf{i}$ to $H$ with weight $\u^T \mathbf{i}$
           }
         }
         \Return all items in $H$
       }
 \end{algorithm}

As described in Section~\ref{sec:background-indexes}, Koenigstein et
al.  introduced the idea of clustering user vectors to evaluate approximate
\topK queries, and showed how to bound the approximation error based on the angle
between the centroid and user vector (Equation~\ref{eq:upper_bound}).  In this
section, we extend this inequality to compute \emph{exact} \topK queries, which
forms the basis of \simdexi.  In particular, we show how to use this
approximation to prune items that cannot belong in $\u$'s \topK.

Instead of computing the upper bound for each user-item pair, we
instead compute it for the \emph{largest} such $\theta_{uc}$ contained
in a given cluster, which we denote $\theta_b$. Subsequently, we have:

\begin{equation}
  r_{ci}^*\leq
  \begin{cases}
    \norm{\mathbf{i}}\cos(\theta_{ic} - \theta_{b}) & \text{if } \theta_{b} < \theta_{ic} \\
      \norm{\mathbf{i}} & \text{otherwise}. \\
  \end{cases}
\label{eq:appx_bound}
\end{equation}

%
With this coarser approximation,
\simdexi first computes $r_{ci}^*$ for each centroid $\c$ and item $\mathbf{i}$
and, for each centroid $\c$, produces a list of items sorted by
$r_{ci}^*$, denoted $L_c$. 
Both user clustering and this construction procedure are
represented by the \texttt{ConstructIndex} procedure in
Algorithm~\ref{alg:mz_special}.

\subsection{Performing Queries Using Centroid Index}
\label{sec:exact_top_k}

Given the sorted centroid lists, we can now perform \topK queries for
each user. First, we populate the min-heap $H$ with the first $K$ items from
the user's centroid list $L_c$; the items in $H$ are weighted by their
respective true ratings, $r_{ui}$. Then, we examine the remaining items in
$L_c$ by iteratively applying the upper bound in Equation~\ref{eq:appx_bound}.
Each item's upper bound is then compared to the smallest $r_{ui}$ in $H$. If
the upper bound is less than the smallest $r_{ui}$, then that item and all
subsequent items in $L_c$ cannot have an $r_{ui}$ greater than those already in
$H$, since $L_c$ is sorted in descending order by the upper bound, which is
always greater than or equal to $r_{ui}$. Therefore, we can skip those
remaining items and return the items in $H$ as our top $K$. This procedure is
represented by the \texttt{QueryIndex} procedure in
Algorithm~\ref{alg:mz_special}.



\subsection{Hardware-Efficient Execution}
\label{sec:hw}

As described at the start of this section, we designed \sbi to
leverage hardware-efficient libraries for linear algebra and advanced
analytics. \sbi uses $k$-means for clustering, of which there are many
efficient implementations. After computing the upper bound in
Equation~\ref{eq:appx_bound}, \sbi uses efficient sorting routines,
then walks the index during queries. This raises a natural question:
is it possible to hardware-accelerate \sbi's last step of index
traversal?

Because each \sbi cluster is shared across multiple users, we can block the first
several steps of each walk. Specifically, for the first $B$ items in a cluster
list, we perform a blocked matrix multiply between all user vectors in the
cluster and the first $B$ item vectors in the cluster item list. This work
sharing allows \simdex's index traversal routine to make use of more
matrix-matrix multiply operations (instead of less efficient matrix-vector
or inner product operations) while still
benefiting from \simdex's early termination routines. If a user only needs to
visit fewer than $B$ items, this will result in wasted work. However, on
balance, for modest blocking sizes, we find that sharing the first $B$ items is
beneficial to end-to-end runtime.  We evaluate the impact of this optimization
via a lesion study in Section~\ref{sec:evaluation}.

\minihead{Combining Pruning and Hardware Efficiency} In
Section~\ref{sec:evaluation}, we benchmark \simdex on large \U, \I, and $f$,
and yet we still find blocked matrix multiply to be competitive. Nevertheless, a
natural concern with BMM is that, as model sizes grow,
constant-factor runtime improvements due to hardware effects will diminish
compared to an index, rendering matrix multiply much slower. That is, while an
index can potentially prune irrelevant items if they are added to a dataset,
BMM will compute them all. By combining the ability to
algorithmically prune items \emph{and} reap the benefit of hardware-efficient
computation, \simdex mitigates this concern; its index construction routine
will place potentially highly-ranked items at the start of each cluster list,
so the work of many users who are likely to prefer them can be accelerated via
matrix-matrix multiply, without needing to visit \emph{all} items.

\minihead{Index Memory Requirement and Serving Runtime} For a \U and \I with
$f$ latent factors, the \simdex index requires $O(|C||I|f)$ storage, with one
sorted list of length $|I|$ per cluster. Given a $k$-means running time of
$O(f|C||U|i)$ for $i$ iterations and $\bar{w}$, the average number of
items visited per user in Algorithm \ref{alg:mz_special}, \simdex runs
in time:

\begin{equation}
  O(f(|C||U|i +|C||I|\log|I|+ |U|\bar{w}\log K)),
\label{eq:runtime}
\end{equation}

where $|C||I|\log|I|$ captures the index construction time (including
sorting) and $ |U|\bar{w}\log K$ captures the time to walk each
list. \simdex is faster than brute force when
Equation~\ref{eq:runtime} is less than $O(f(|U||I|+|U||I|\log
K))$; therefore, minimizing $\bar{w}$ is instrumental to \simdex
performance.

\minihead{\sbi Parameters} \simdex's index exposes three parameters: the item
blocking factor ($B$), the number of clusters ($|C|$), and the number of
iterations to run $k$-means ($i$). All three of these parameters can be tuned
to maximize performance; however, we found that \simdex's runtime is robust
across various settings of $B$, $C$, and $i$. After conducting a parameter
sweep, we found that $B=4096$, $|C|=8$, and $i=3$ is effective for many inputs.
(Surprisingly, only a few iterations of $k$-means are needed to produce an
adequate set of clusters.) In our evaluation, we report results with these three
settings for all of our experiments.
\subsection{Practical Usage}
Because \simdex clusters users rather than items, it assumes a relatively
static set of users in the desired application. For applications with a dynamic
set of users (assuming an initial static set is present), we can
adapt \simdex as follows: after generating our user clusters on the initial set
of users, we forgo the clustering step for new users and simply assign them to
the extant centroid that yields the smallest $L_2$ distance (\ie perform only
the assignment step in $k$-means). In our experiments, we found this strategy
to be empirically effective for the collaborative filtering datasets used in
our evaluation benchmarks: running $k$-means on a smaller sample of users (10\%) and
then assigning the remaining users to the resulting centroids did not impact the
end-to-end runtime by more than 1\%.  In real applications, the churn in new
users may reach a critical mass, and users can be removed as well as added.
Therefore, periodically scheduling new rounds of user clustering to update the
centroids is an interesting research question, which we leave as future work.

\eat{
\minihead{Overview} To construct the index, we first cluster users, then compute a sorted
list of items for each cluster using an upper bound on user ratings
for the each user in the cluster.  For a given user, we walk the
user's respective list, which is monotonically decreasing with
respect to the upper bound on each rating. Thus, we find $K$ items
with rating greater than the current item's upper bound, we can safely
stop. We may visit more items than needed (and the exact number will
depend on both the size of $\theta_{b}$ and the distribution of
ratings), but, because each $L_c$ is monotonically decreasing in the
upper bound given by Equation~\ref{eq:appx_bound}, we will not
miss any items for each user $\u$.

\minihead{Proof of \simdex Correctness}
\label{sec:proof}

We proceed to show that \simdex returns exact \topk results. The
intuition behind this property is that each \simdex index traversal
visits a sequence of items with a \emph{monotonically decreasing}
upper bound on the true rating for each user-item pair.

We start be restating a property of the \simdex construction procedure that will prove helpful:

\begin{property}[Monotonicity of sorted index lists]
\label{lem:monotone}
Let $L_{c}$ be a sorted index list constructed by \simdex. For each pair
  of items $\mathbf{i}$,$\j$ in $L_{c}$, if $\mathbf{i}$ appears before $\j$ in $L_{c}$,
then $ r_{ci}^* \geq r_{cj}^*$.
\end{property}

\begin{proof}
  This is by construction: Algorithm~\ref{alg:mz_special} sorts each list
  in descending order by the corresponding $ r_{ci}^*$.
\end{proof}

This property, coupled with the fact that each $\hat{r}^{b}_i$ is a true
upper bound on the rating for item $i$ for each user $u$ assigned to
$\c$, establishes our correctness criteria. First, we prove the upper
bound:

\begin{lemma}[Upper bound in index lists]
\label{lem:ub}
  Given $r_{ci}^*$ corresponding to item $\mathbf{i}$, $\forall$ users $\u$
  assigned to $\c$, $r_{ci}^* \geq \frac{\u^T \mathbf{i}}{\norm{\u}}$.
\end{lemma}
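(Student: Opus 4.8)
The plan is to reduce the statement to the two displayed inequalities already assembled in Section~\ref{sec:clustering} (the one leading to Equation~\ref{eq:upper_bound} and the relaxation to Equation~\ref{eq:appx_bound}), treating with care the single new ingredient: that replacing $\theta_{uc}$ by the cluster-wide maximum $\theta_b$ can only loosen, not violate, the bound. Fix a user $\u$ assigned to $\c$ and an item $\i$, and let $\theta_{ui},\theta_{ic},\theta_{uc}$ be the three pairwise angles as in Figure~\ref{fig:thetas}. Start from the identity $\frac{\u\cdot\i}{\norm{\u}} = \norm{\i}\cos\theta_{ui}$ (the definition of the inner product, divided by $\norm{\u} > 0$); this is exactly the quantity $r_{ui}^*$ from Section~\ref{sec:clustering}, so the lemma is the assertion that the centroid-level bound $r_{ci}^*$ of Equation~\ref{eq:appx_bound} dominates $r_{ui}^*$ uniformly over the cluster.

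The first step is to re-establish Equation~\ref{eq:upper_bound} for this fixed $\u$. Angular distance on the unit sphere is a metric, so $\theta_{ui} \ge |\theta_{ic}-\theta_{uc}|$; moreover $\theta_{ui}\in[0,\pi]$ since it is an angle between two vectors, and $\cos$ is (weakly) decreasing on $[0,\pi]$. If $\theta_{uc}\ge\theta_{ic}$ we simply use $\cos\theta_{ui}\le 1$, giving $\frac{\u\cdot\i}{\norm{\u}} \le \norm{\i}$. If $\theta_{uc} < \theta_{ic}$, then $\theta_{ic}-\theta_{uc}\in(0,\pi]$ (using $\theta_{ic}\le\pi$, $\theta_{uc}\ge0$) and $\theta_{ui}\ge\theta_{ic}-\theta_{uc}$, so monotonicity of $\cos$ on $[0,\pi]$ yields $\cos\theta_{ui}\le\cos(\theta_{ic}-\theta_{uc})$, i.e. $\frac{\u\cdot\i}{\norm{\u}} \le \norm{\i}\cos(\theta_{ic}-\theta_{uc})$. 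This is precisely Equation~\ref{eq:upper_bound} with $r_{ui}^*$ in place of the left-hand side.

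The second step is to pass from $\theta_{uc}$ to $\theta_b$, which by definition is the maximum of $\theta_{u'c}$ over all users $\u'$ assigned to $\c$, hence $\theta_{uc}\le\theta_b$. Split on the two branches of Equation~\ref{eq:appx_bound}. If $\theta_b\ge\theta_{ic}$, then $r_{ci}^* = \norm{\i} \ge \norm{\i}\cos\theta_{ui} = \frac{\u\cdot\i}{\norm{\u}}$ (and this absorbs the $\theta_{uc}\ge\theta_{ic}$ case of step one, since $\theta_{uc}\le\theta_b$). If $\theta_b<\theta_{ic}$, then also $\theta_{uc}\le\theta_b<\theta_{ic}$, so step one gives $\frac{\u\cdot\i}{\norm{\u}} \le \norm{\i}\cos(\theta_{ic}-\theta_{uc})$; and since $0 < \theta_{ic}-\theta_b \le \theta_{ic}-\theta_{uc} \le \theta_{ic} \le \pi$ with $\cos$ decreasing on $[0,\pi]$, we get $\cos(\theta_{ic}-\theta_{uc}) \le \cos(\theta_{ic}-\theta_b)$, hence $\frac{\u\cdot\i}{\norm{\u}} \le \norm{\i}\cos(\theta_{ic}-\theta_b) = r_{ci}^*$. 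In both branches $r_{ci}^* \ge \frac{\u\cdot\i}{\norm{\u}}$, completing the proof.

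The main obstacle here is bookkeeping rather than depth: one must keep every angle that appears as an argument of $\cos$ inside $[0,\pi]$ before invoking monotonicity (this is why the two-branch form of Equations~\ref{eq:upper_bound} and~\ref{eq:appx_bound} is needed in the first place), and one should check that the degenerate cases ($\norm{\i}=0$, or $\theta_{uc}=\theta_{ic}$ where the branches agree) are handled by the inequalities above. I would also note explicitly that the argument uses only that $\u$ is assigned to $\c$ — nothing about $\i$'s position in $L_c$ or about $\u$'s rank — which is exactly what makes $r_{ci}^*$ a valid termination bound for every user in the cluster in Algorithm~\ref{alg:mz_special}.
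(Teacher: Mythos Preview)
Your proof is correct and follows essentially the same approach as the paper's: both start from $\frac{\u\cdot\i}{\norm{\u}}=\norm{\i}\cos\theta_{ui}$, invoke the triangle inequality on angular distances to bound $\theta_{ui}$ via $\theta_{ic}$ and $\theta_{uc}$, and then use monotonicity of $\cos$ on $[0,\pi]$ together with $\theta_b\ge\theta_{uc}$ to pass to $r_{ci}^*$. The only cosmetic difference is that the paper phrases the intermediate step as $\norm{\i}\max_{\theta\in[\theta_{ic}\pm\theta_{uc}]}\cos\theta \le \norm{\i}\max_{\theta\in[\theta_{ic}\pm\theta_b]}\cos\theta$ (an interval-enlargement argument) rather than your explicit two-branch case split on Equations~\ref{eq:upper_bound} and~\ref{eq:appx_bound}; your version is in fact more careful about keeping the arguments of $\cos$ inside $[0,\pi]$.
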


\begin{proof}
  This is by construction. Recall that $\u^T \mathbf{i} =$\\
  $\norm{\u}\norm{\mathbf{i}}\cos(\theta_{ui})$, where $\theta_{ui}$ is the
  angular distance between $\u$ and $\mathbf{i}$. Therefore, we have $\frac{\u^T
     \mathbf{i}}{\norm{\u}} = \norm{\mathbf{i}}\cos(\theta_{ui})$. By the
  triangle inequality, given $\theta_{ic}$, the angular distance
  between item $\mathbf{i}$ and cluster $\c$, we have
  $$
  \norm{\mathbf{i}}\cos(\theta_{ui}) \leq \norm{\mathbf{i}} \max_{\theta \in [\theta_{ic} \pm \theta_{uc}]} \cos \ \theta
  $$
  For all $c_i, c_j \in [0, \pi]$, if $c_i > c_j$, then $\cos(c_i) <
  \cos(c_j)$. In our case, $\theta_{uc}, \theta_{b} \in [0, \pi]$ by
  construction, and $\theta_{b}$ is chosen to be larger than all
  $\theta_{u'c}$ for all $\mathbf{u'}$ assigned to the cluster.
  $$
  \norm{\mathbf{i}} \max_{\theta \in [\theta_{ic} \pm \theta_{uc}]} \cos \theta \leq \norm{\mathbf{i}} \max_{\theta \in [\theta_{ic} \pm \theta_{b}]} \cos \ \theta = r_{ci}^*.
  $$
  Therefore, $r_{ci}^* \geq \frac{\u^T \mathbf{i}}{\norm{\u}}$, as
  desired.
\end{proof}

Finally, we can prove that \simdex's index is exact:

\begin{theorem}
  \simdex's index returns exact \topK results for Maximum Inner
  Product Search.
\label{thm:exact}
\end{theorem}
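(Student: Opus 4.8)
The plan is to show that Algorithm~\ref{alg:mz_special}, when run on a user $\u$ assigned to cluster $\c$ with the sorted list $L_c$, returns exactly the $K$ items with the largest inner products $\u \cdot \i$. I would argue this in two parts: (i) the algorithm never terminates too early, i.e. when it breaks out of the second loop, every item not yet examined has true rating strictly below the current heap minimum, so it cannot belong in the true \topK; and (ii) among the items the algorithm does examine, the min-heap correctly maintains the $K$ largest true ratings seen so far, which is a standard selection-by-heap invariant.

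For part (i), the key is to combine the two lemmas already established. Fix the moment the loop breaks at position $j$: by the break condition, $\texttt{minHeap.min()} > r_{ci_j}^*$. By Lemma~\ref{lem:monotone}, for every later item $\i_{j'}$ with $j' \geq j$ we have $r_{ci_{j'}}^* \leq r_{ci_j}^*$, hence $\texttt{minHeap.min()} > r_{ci_{j'}}^*$. By Lemma~\ref{lem:ub}, $r_{ci_{j'}}^* \geq \frac{\u \cdot \i_{j'}}{\norm{\u}}$ for every user $\u$ assigned to $\c$; since $\norm{\u} > 0$, multiplying through preserves the inequality and gives $\u \cdot \i_{j'} < \norm{\u}\cdot\texttt{minHeap.min()}$. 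Because the heap stores true ratings $\u \cdot \i$ (not scaled ones), I need to be slightly careful about whether the heap minimum is compared against $r^*$ or against $\u\cdot\i$ — the cleanest route is to note that $r_{ui}^* = r_{ui}/\norm{\u}$, so the whole comparison can be phrased in the scaled quantity $r_{ui}^*$, where Lemma~\ref{lem:ub} applies directly; then the heap minimum in scaled units dominates the scaled rating of every unexamined item, so no unexamined item can displace any of the $K$ items currently in the heap. This is the step I expect to be the main obstacle, since it requires lining up the scaled quantity $r_{ci}^*$ used to build $L_c$ with the true inner products $\u \cdot \i$ actually compared inside the loop; once the bookkeeping with $\norm{\u}$ is made explicit it is routine.

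For part (ii), I would observe that the first loop seeds the heap with the true ratings of the first $K$ items of $L_c$, and the second loop, for each subsequently examined item, replaces the heap minimum whenever the new true rating exceeds it. This is exactly the classical streaming top-$K$ procedure, so after processing any prefix of examined items the heap contains the $K$ largest true ratings among those items. Combining with part (i): at termination the examined items include all items that could possibly be in the true \topK (every skipped item has been shown to have strictly smaller true rating than all $K$ heap entries), so the heap contents are the true \topK for $\u$. Finally, since this holds for every user and every cluster, \simdex's index returns exact \topK results for MIPS, completing the proof. One edge case to dispatch: if $|I| \leq K$ the first loop already exhausts the list and returns all items, which is trivially correct; and if the break never triggers, the second loop examines all of $L_c$, reducing to brute-force selection, also correct.
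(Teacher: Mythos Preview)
Your proposal is correct and uses the same two ingredients as the paper (Lemma~\ref{lem:monotone} and Lemma~\ref{lem:ub}), but structures the argument differently. The paper gives a short proof by contradiction: assume the returned set $I'$ differs from the true \topK $I^*$, pick $\i_w\in I'\setminus I^*$ and $\i_r\in I^*\setminus I'$, and chain the two lemmas to derive $\u\cdot\i_w > \u\cdot\i_r$, contradicting $\i_r\in I^*$. You instead give a direct two-part argument: (i) when the break fires, every unexamined item is dominated by the current heap minimum (via the same lemma chain), and (ii) the heap maintains the correct \topK over all examined items by the standard streaming invariant. The logical content is identical; your decomposition is a bit more explicit about the heap invariant and the edge cases, and you are more careful than the paper about the $\norm{\u}$ scaling mismatch between the stored ratings $\u\cdot\i$ and the bounds $r_{ci}^*$ (the paper's proof silently divides by $\norm{\u}$ when it writes the break condition as $r_{ci_s}^* < \frac{\u\cdot\i_w}{\norm{\u}}$, which is exactly the bookkeeping you flag). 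Either route is fine here.
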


\begin{proof}
  Suppose there exists user $\u$ such that
  Algorithm~\ref{alg:mz_special} returns a set of items $I'$ that is not
  the true \topk result $I^*$ for MIPS for $\u$. Then there exists at
  least one item $\i_w \in I'$ such that $\i_w \notin I^*$. $|I'| =
  |I^*|=k$ because Algorithm~\ref{alg:mz_special} always returns $k$
  items. Therefore, there must also exist at least one item $\i_r \in
  I^*$ such that $\i_r \notin I'$. Because $\i_r$ is part of the true
  \topk for $\u$ and $\i_w$ is not, it must be the case that $\u^T
  \i_r > \u^T \i_w$. Algorithm~\ref{alg:mz_special} computes $\u^T
  \i$ for each $\i$ it visits, and, had it visited $\i_r$, then
  $\i_r$ would have been added to the min-heap and eventually returned
  as $I'$. Therefore, Algorithm~\ref{alg:mz_special} must not have
  visited $\i_r$ in the corresponding $L_{cm}$ sorted index list. This
  implies that there exists some item $\i_s$ in $L_{cm}$ appearing
  before $\i_r$ such that $ r_{ci_s}^* < \frac{\u^T
    \i_w}{\norm{u}}$, thus causing Algorithm~\ref{alg:mz_special} to
  return $I'$ without visiting $\i_r$. Per Property~\ref{lem:monotone},
  this implies $r_{ci_s} \geq r_{ci_r}^*$, so $\frac{\u^T
    \i_w}{\norm{u}} > r_{ci_r}^*$. However, per
  Lemma~\ref{lem:ub}, $r_{ci_r}^* \geq \frac{\u^T
    \i_r}{\norm{\u}}$. This implies $\u^T \i_w > \u^T
   \i_r$, a contradiction, and so $I'$ must contain the true \topk result.
\end{proof}

}
\eat{
\begin{figure}
  \centering
\hspace{.1\columnwidth} \includegraphics[width=0.25\columnwidth]{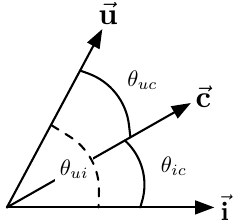}\hfill
  \includegraphics[width=0.25\columnwidth]{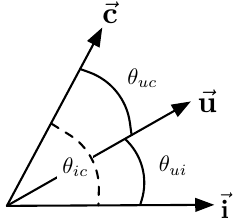} \hspace*{.1\columnwidth}
  \caption{Illustration of the angular relationships between $\u$, $\i$, and
  $\c$. \simdex leverages the triangle inequality to bound the distance between
  users and their corresponding centroids.}
  \label{fig:thetas}
  \vspace{-2em}
\end{figure}

Let $\theta_{ui}$ be the angle between $\u$ and $\i$, $\theta_{ic}$ be
the angle between $\i$ and $\c$ and $\theta_{uc}$ be the angle between
$\u$ and $\c$. Finally, let $r_{ui}$ be the rating for the user-item
pair $\u,\i$.

By the triangle inequality on angular distances in metric spaces, we have that
$$
|\theta_{ic} - \theta_{uc}| \leq \theta_{ui} \leq \theta_{ic} + \theta_{uc}.
$$
Thus, we can compute an upper bound on $r_{ui}$:
$$
r_{ui} = \u^T \i = \norm{\u} \norm{\i} \cos(\theta_{ui}) \leq \norm{\u} \norm{\i} \max_{\theta \in [\theta_{ic} \pm \theta_{uc}]}\cos \ \theta.
$$

For \topK, each user's ratings are invariant under linear
scaling. Thus, we can omit $\norm{u}$ to preserve the \emph{relative}
ordering of items for a single user and obtain a linear scaling of
$r_{ui}$, denoted $r_{ui}^*$ such that $r_{ui}^*\norm{\u} = r_{ui}$:
\begin{equation}
\label{eq:max}
  r_{ui}^* \leq \norm{\i} \max_{\theta \in [\theta_{ic} \pm \theta_{uc}]} \cos \ \theta.
\end{equation}
Since $\cos^{-1}(x) \in [0, \pi]$, we have two cases to consider. If
$\theta_{uc} < \theta_{ic}$, then $\theta_{uc} < \pi$ and so
Equation~\ref{eq:max} is maximized at $\theta_{ic} - \theta_{uc}$. If
$\theta_{uc} \geq \theta_{ic}$, then Equation~\ref{eq:max} is
maximized at the origin. Rewriting Equation~\ref{eq:max}, we have:
\begin{equation}
  r_{ui}^* \leq
  \begin{cases}
      \norm{\i} \cos(\theta_{ic} - \theta_{uc}) & \text{if } \theta_{uc} < \theta_{ic} \\
      \norm{\i} & \text{otherwise}. \\
  \end{cases}
  \label{eq:upper_bound}
\end{equation}
Equation~\ref{eq:upper_bound} provides a means of monotonically
ranking items according to their angular distance from the cluster
center.
}

\section{\opt: A MIPS Optimizer}
\label{sec:optimizer}

\begin{figure}[t]
\centering
  \includegraphics[width=0.90\columnwidth]{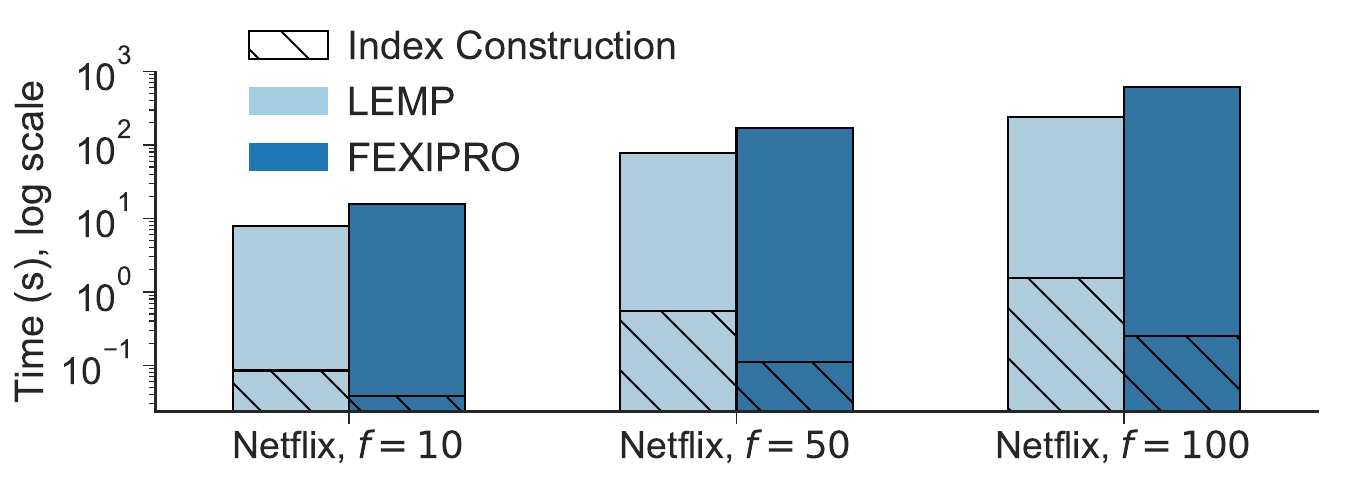}
  \vspace{-0.5em}
  \caption{Index construction time vs. end-to-end runtime for LEMP and FEXIPRO
  retrieving the top $K=1$ items for all users. Index construction time is
  multiple orders of magnitude less than the retrieval time. (Note the log
  scale on the y-axis.)}
  \squeezeup
  \label{fig:index_construction}
\end{figure}


Although \simdex is, on average, an improvement on the state of the art for
MIPS indexes, it still does not always outperform BMM for all inputs. We cannot
rely solely on existing MIPS indexes for the fastest \topK serving strategy.
Further, we cannot use rule-based principles strictly based on the weight
vector characteristics of \U and \I to decide which serving technique is best;
computing a measure of \U and \I's ``index-ability'' is unlikely to be faster
than computing the top $K$ itself. Instead, multiple methods---including
BMM---are necessary to serve a wide variety of inputs as efficiently as
possible, and deciding between these techniques based on empirical cost
measurements is critical.  In response, we propose \opt, a new optimizer that
automates the process of selecting a fast serving strategy.

Given weight vectors $\U$ and $\I$ as input, \opt's goal is to select the
fastest serving strategy, choosing between either an exact indexing strategy or
BMM.  \opt is designed to operate in the online setting,
where we have no a priori knowledge of the input model or underlying
hardware except cache sizing.
Therefore, \opt must run quickly---otherwise, the cost of making an
optimization decision might dominate the cost of computing the top $K$ using a
single technique.

\subsection{Online, Sample-Based Optimization}
When operating online, the key idea in \opt is to estimate the
overall performance of each
serving technique based on a small sample of the users in \U.
Two factors make this strategy possible, and allow \opt to
achieve high accuracy at low (often less than 5\%) runtime overhead.

First, while \topK queries can be slow, requiring up to 14 hours to evaluate
for all of the users in \U in our measurements, the actual index
construction time is relatively short for current indexing
schemes, as is illustrated in
Figure~\ref{fig:index_construction}.  In Section~\ref{sec:evaluation}, we show
that the preprocessing and index construction overhead for FEXIPRO is on
average 1.9\%, for LEMP is on average 0.5\%, and for \simdex
(Section~\ref{sec:index}) is 1.5\% in a batch scoring task that builds an index
and then computes the top $K=1$ prediction for every user. This ratio is even
lower for larger $K$.  Thus, when
evaluating whether to use one of these indexing methods, we can quickly
construct a full index on $\U$ and/or $\I$, and then evaluate its quality using a
sample of the users (as we detail later).  Note that, because index construction
is inexpensive for the fastest indexing techniques in the literature, we
currently always construct a full index.

Second, to evaluate the \topK performance of \simdexb,
\opt extrapolates its performance on a small subset of users.
Optimized matrix multiplication libraries like Intel MKL generally implement an
$O(mnk)$ algorithm (for multiplying $m\times n$ and $n\times k$ matrices)
that scales predictably once matrices are larger than the machine's L2 cache
size, since they divide and process the matrices in blocks.
Thus, \opt runs matrix-matrix multiplication for a subset of users and items that
at least fully occupies the L2 cache and extrapolates performance
from there. As we show in Section~\ref{sec:evaluation}, this approach
accurately captures the performance of BMM.
In addition, we also use the sample to extrapolate the runtime of
extracting the actual \topK values (\eg using a min-heap) once the ratings matrix has been
computed.

Sample-based runtime estimation is a classic technique in data management
systems, with previous applications spanning cardinality
estimation~\cite{cardinality}, online query progress
estimation~\cite{progress}, MapReduce~\cite{parallax}, and cloud~\cite{cloud}
databases. Perhaps closest to our proposal is~\cite{teflioudi2015lemp}, which
uses an online optimizer to select between retrieval algorithms for each bucket of
items. In \opt, we use a sampling-based optimizer to select between using an
index \emph{at all} and BMM---both logical operators for the
same task of \topK search. Additionally, unlike~\cite{teflioudi2015lemp}, we
empirically evaluate the runtime overhead, accuracy in obtaining these
estimates, and effect on end-to-end MIPS serving performance in
Section~\ref{sec:evaluation}.

In more detail, \opt performs online runtime estimation of indexing and BMM as
follows: First, the optimizer constructs an index on \U and/or \I (or multiple
indexes, if necessary). Second, \opt performs queries
with a randomly chosen subset of user vectors and records the runtime. Third,
the optimizer computes the rating predictions via blocked matrix multiply for a
subset of user vectors and computes their top $K$. \opt subsequently estimates
the total runtime for each method.  Finally, if \opt was invoked in a batch
prediction setting, it completes the \topK computation for the remaining users
using the faster approach and reuses the results that it already had for the
sampled users.


To implement the above steps, \opt must also decide on a sample size. For our
target workloads, the number of users is large---at least 480,000. As a result,
we can obtain high-quality estimates with only a small fraction of
users---typically, 0.5\%. However, especially for BMM and for
hardware-optimized indexes such as LEMP, the sample size must be sufficiently
large to demonstrate the benefits of hardware optimizations. For example, if we
perform matrix-matrix multiply with only one user, we effectively perform
matrix-vector multiply, which is substantially slower. Thus, the optimizer must
ensure that the chosen sample is sufficiently large to illustrate hardware
effects; in our optimizer implementation, we require that the sample size at
least occupy the entire L2 cache (256 KB in our experiments). For most \U, this
is easily occupied by a 0.5\% sample---with 64-bit double-precision weights,
2048 \u vectors with 100 latent factors requires 1.64MB of memory, easily
occupying this requirement.  Section~\ref{sec:evaluation} provides empirical
measurements of the effect of sampling on runtime and accuracy.

\minihead{Offline Performance Profiling for BMM}
For blocked matrix multiply, we can alternatively develop an analytical cost model to
predict offline (\ie without sampling) the expected runtime performance of just
\emph{computing the user-item ratings}---but not the selection of the top $K$ items.
Because dense matrix multiply is compute-bound, we can model the runtime based
on the total number of floating-point operations (FLOPs) needed and the FLOPs
per cycle of the CPU~\cite{iakymchuk2011execution}.


%

In our experiments, we found that this analytical model was accurate within 5\% of the
measured dense matrix multiply runtimes in our evaluation. However, this model
does not extend to the \topK selection stage of the MIPS problem: because the
traversal through the min-heap depends on the distribution of ratings for each
user, this data-dependent component of the computation is difficult to
analytically model. While the dense matrix-matrix multiply operation does take the
bulk of the computation runtime, the min-heap traversal time is
non-negligible---at least 9.5\% for our largest models. Therefore, we report
results for \opt only using the online sampling approach in our evaluation.


\minihead{Optimization: Early Stopping with t-test} Rather than measure the
performance of both the index and blocked matrix multiple on the \emph{entire}
sample of users, we can, in some cases, stop early, by applying a one-sample
t-test~\cite{freedman2009statistical} on the per-user query times. That is,
after measuring the performance of blocked matrix multiply, we can
incrementally apply a one-sample t-test on the per-user times seen so far and
compare it against the mean query time provided by BMM. If
the calculated p-value of the test is less than a pre-determined threshold
(e.g., 5\%), then the optimizer can reject the null hypothesis and select
either BMM or the index, whichever has the lower mean query
time.

This technique, of course, will not work for indexes that also batch user
queries for better performance; in those cases, the full sample has to be used
to realize the full effects of the L2 cache. Therefore, \opt cannot employ this
technique for indexes such as \simdexi. However, for indexes that do not batch
users, the t-test is empirically effective for early stopping. For example, we
found in our experiments that \opt needed to only examine 4\% of the full
sample of users when deciding between FEXIPRO and blocked matrix multiply for
$K=1$ on Netflix, $f=10$.

\subsection{Overhead of Optimizer} We can statically bound the overhead of
\opt's optimization routine. Denote the index construction time as $C_I$, the
per-user index-based query time as $Q_I$, and the per-user BMM
query time as $M_I$. Provided that we can estimate $Q_I$ and $M_I$ accurately
for a sample fraction of users of size $s$ (of $n$ total users), then the total
runtime overhead of the optimizer is given by $C_I + \max(Q_I, M_I)
\frac{s}{n}$.

Given that the cost of index construction is on average 1.5\%, we can consider
a few examples. If we use a 1\% sample of users to evaluate our trade-off, and if
$Q_I = M_I$, then the overhead due to optimization is approximately $1\%$. If
$Q_I$ and $M_I$ differ by a factor of $3$, the overhead of optimization
compared to an oracle that automatically selects the right model is
approximately $3\%$. However, compared to simply choosing the slower of the two
methods, the total speedup is $\frac{300}{103} = 2.93\times$. More generally,
given a $d$-times performance differential, the overhead will be $C_I
+\frac{ds}{n}$. Given that the empirical performance differences we observe are
regularly 2-3$\times$, this overhead is relatively small compared to the
benefit.

The above analysis is predicated on the ability to inexpensively and
accurately estimate $Q_I$ and $M_I$. In Section~\ref{sec:evaluation}, we
demonstrate that sampling less than 1\% of users results in high accuracy with
an average of 6.3\% overhead across four types of indexes.

\section{Experimental Evaluation}
\label{sec:evaluation}

In this section, we empirically evaluate the index performance of \simdex, the
runtime efficiency of \simdexb, and the optimizer efficacy of \opt across
various datasets and indexing methods found in the literature.  Our results
show that, across our reference datasets, \opt delivers, on average, a
2.8$\times$ runtime improvement for LEMP, 1.8$\times$ for \sbi, a 5.2$\times$
for FEXIPRO-SI, and 6$\times$ improvement for FEXIPRO-SIR, within 8.8\%,
11.2\%, 11.1\%, and 8.1\% of an oracle optimizer, respectively. In addition,
\sbi is, on average, 1.78$\times$ faster than LEMP and 4.1$\times$ faster than
FEXIPRO; when combined with \opt, \simdex is 3.2$\times$ faster than LEMP.



\subsection{Experimental Setup}

\minihead{Datasets} We use four reference benchmark datasets for our
experimental evaluation (\autoref{table:datasets}), including three
collaborative filtering datasets. Of these, the Netflix dataset is by far the
smallest (480K users and 17K items)---but arguably the best-studied in the
literature for collaborative filtering. The Yahoo R2 dataset is the largest
(1.8M users), and has not been benchmarked in prior work on MIPS serving; we
select it to benchmark \opt's scalability. The fourth dataset, GloVe-Twitter,
contains high-dimensional (up to $f=200$) word embeddings generated from a
corpus of Tweets, and has been previously used to benchmark approximate
nearest-neighbor and MIPS algorithms. Per~\cite{teflioudi2016lempTODS}, we use
the same permutation to select user vectors from the dataset, and use the
remaining vectors as item vectors.

\minihead{Models} We evaluate \topK performance over a range of
collaborative filtering models trained on these datasets, varying the number of
latent factors and tuning the regularization for optimal accuracy.

To begin, the authors of LEMP~\cite{teflioudi2015lemp} have made the
models used in their evaluation publicly
available;\footnote{\url{http://dws.informatik.uni-mannheim.de/en/resources/software/lemp}}
for the Netflix dataset, these models were trained using Distributed
Stochastic Gradient Descent, as described
in~\cite{teflioudi2012distributed}, and we denote these models by
\texttt{*-DSGD} throughout this section. For the Yahoo Music KDD
dataset, the LEMP authors evaluated against the model found
in~\cite{koenigstein2011yahoo}, which is also considered one of the
canonical reference models for this dataset. We denote this model by
\texttt{KDD-REF}.

In addition to these models, we also train explicit feedback models (which
incorporate the ratings made available in each dataset) using the NOMAD
toolkit~\cite{yun2014nomad} (denoted~\texttt{*-NOMAD}). We use the
regularization parameter and hyperparameter settings reported
in~\cite{yun2014nomad} as the starting point for a grid search for the optimal
test RMSE. For the Netflix dataset, we also train an additional set of implicit
feedback models~\cite{implicitfeedback}, using Bayesian Personalized Ranking~\cite{rendle2009bpr} as our training algorithm
(denoted~\texttt{*-BPR}). For the Yahoo R2 Music
dataset,
the literature did not contain any previously reported hyperparameter settings;
therefore, we performed an expanded grid search of $\lambda = \{0, 1e^{-7},
1e^{-6},\dots,1\}$.  For all models (i.e., dataset and number of latent
factors), we utilize the most accurate models with the lowest test RMSE.

\begin{table}
    \centering
\scriptsize

    \caption{Datasets for evaluation}
    \begin{tabular}{| c | c | c | c |}
    \hline
    Dataset & \# users & \# items & \# ratings \\ \hline
    Netflix Prize~\cite{netflix} (Netflix) & 480,189 & 17,770 & 100,480,507 \\ \hline
    Yahoo Music KDD~\cite{dror2012yahoo} (KDD) & 1,000,990 & 624,961 & 252,810,175 \\ \hline
    Yahoo Music R2~\cite{r2} (R2) & 1,823,179 & 136,736 & 699,640,226 \\ \hline
    GloVe-Twitter~\cite{glove} & 100,000 & 1,093,514 & -- \\ \hline
    \end{tabular}
    \label{table:datasets}
    \squeezeup
\end{table}

\minihead{Indexing Strategies} We compare three indexing strategies: LEMP,
FEXIPRO, and \sbi. Recent work from 2016~\cite{teflioudi2015lemp} shows that
LEMP outperforms all prior solutions to the exact MIPS problem , providing a
gold standard for indexes prior to its publication. In a more recent study,
FEXIPRO outperforms LEMP, so we include these two indexing strategies as
representative of the state of the art.

As described below, for LEMP and FEXIPRO, we utilize implementations provided
by the authors of these studies. Our primarily goal is not to directly compare
the engineering quality of these indexes, but instead to understand \emph{i)}
their performance compared to hardware-optimized blocked matrix multiply, and
\emph{ii)} their amenability to online optimization via our proposed methods.
Each index is implemented in C++ with double-precision
floating-point arithmetic.

\miniheadit{LEMP} We utilize the publicly available source code for
LEMP provided by the
authors.\footnote{\url{https://github.com/uma-pi1/LEMP}} We compile LEMP with SIMD
optimizations enabled and tune LEMP's parameters as instructed in the
README. For the retrieval algorithm, we benchmark against LEMP-LI (for
length-based and incremental pruning), which consistently achieves the best
runtimes for \topK computation in~\cite{teflioudi2015lemp}.

\miniheadit{FEXIPRO} We utilize the publicly available source code for
FEXIPRO provided by the
authors.\footnote{\url{https://github.com/Hui-Li/MFRetrieval}} As with
LEMP, we compile FEXIPRO with SIMD optimizations enabled. In addition,
we contacted the authors of FEXIPRO with the experimental results from
this study to obtain guidance regarding parameter tuning. We tuned
parameters per their guidance and report results from both
FEXIPRO with all pruning strategies enabled (denoted FEXIPRO-SIR) and
FEXIPRO with only SVD and integer pruning enabled (denoted
FEXIPRO-SI).

\miniheadit{\sbi Implementation} We implement \simdex in
C++\footnote{\url{https://github.com/stanford-futuredata/optimus-maximus}} using double
precision. We use the open-source Armadillo library~\cite{armadillo} for
$k$-means and Intel MKL~\cite{intel-mkl} for \sbi's shared index traversal. We also compared with
OpenBlas~\cite{openblas} and found limited effect on runtime for the relatively
small shared index traversal operation.  

\miniheadit{BMM Implementation} We also compare to brute-force blocked matrix
multiply using Intel MKL~\cite{intel-mkl}. We compute ratings for users in a
series of batches that each occupy the entirety of memory, and use a min-heap
from the C++ standard library to compute the true top $K$.

\minihead{Optimizer Implementation} We implement \opt from
Section~\ref{sec:optimizer} as a subroutine within each of the three indexes'
main routines. \opt performs sampling and runtime estimation, and then invokes
either MKL or the regular indexing routine based on the output. 

\minihead{Environment} We report results from an Intel Xeon E7-4850 v3 2.20 GHz
processor with 1 TB of RAM. We allow Intel MKL to use the entire RAM in
blocking. Unless otherwise noted, we evaluate our benchmarks
on a single core and report results in the single-threaded setting.

\subsection{End-to-End Index Performance}
\label{sec:bakeoff}

\begin{figure*}[ht!]
  \centering
  \includegraphics[width=\textwidth]{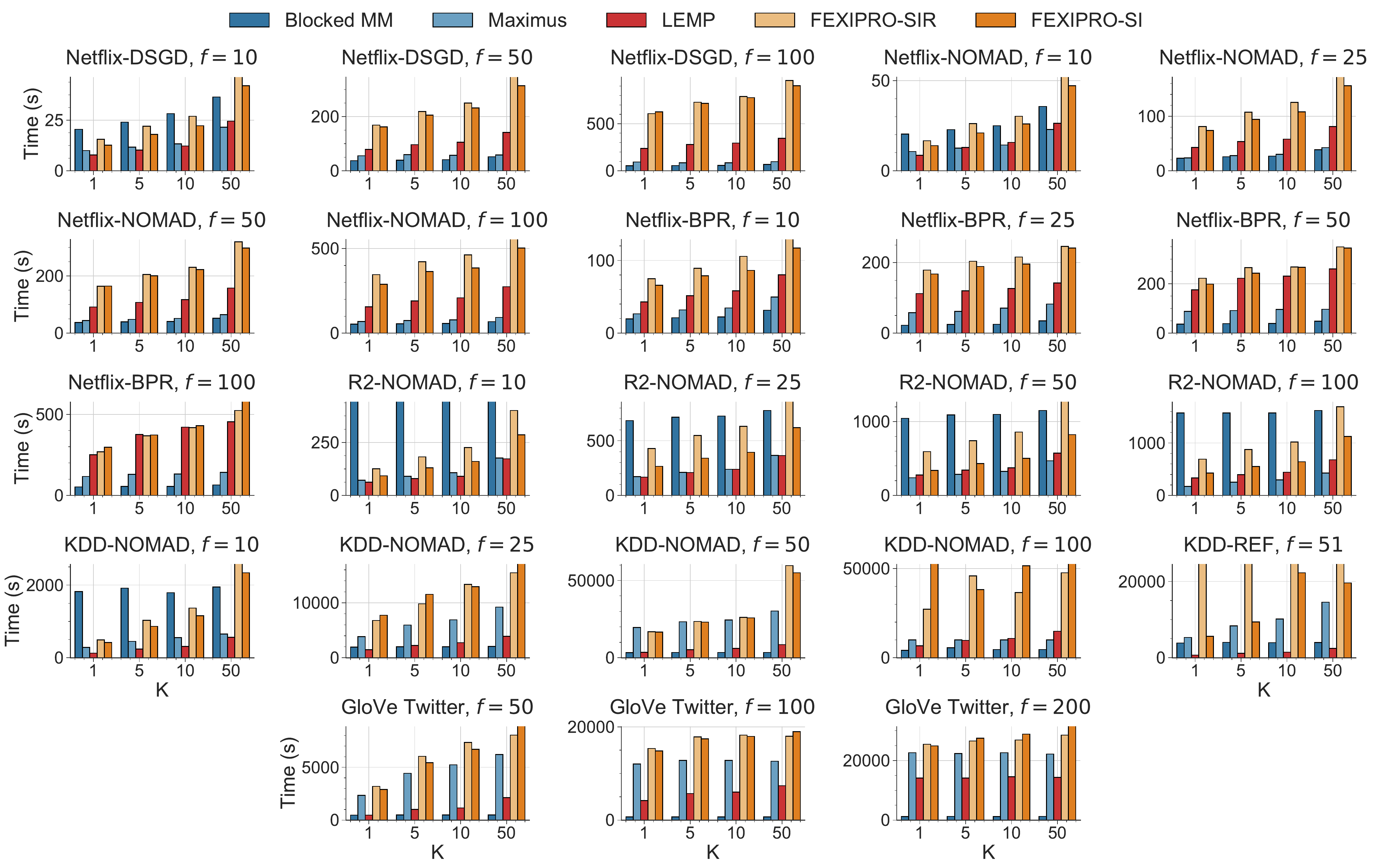}
  \squeezeup
  \caption{MIPS wall-clock time of \simdex and BMM compared to LEMP and FEXIPRO
  (lower is better).
  When combined with \opt, \simdex is 3.2$\times$ faster than
  LEMP, 7.3$\times$ faster than FEXIPRO-SIR, and 6.3$\times$ faster than
  FEXIPRO-SI.}
  \vspace{-1.5em}
  \label{fig:bar_charts}
\end{figure*}
In this section, we examine the end-to-end runtime---which includes index
construction time---of \sbi compared to LEMP, FEXIPRO, and BMM to compute
the top $K$ for all users. Figure~\ref{fig:bar_charts} depicts the results
across all four datasets and all 23 reference models; we report the runtimes
for $K = \{1, 5, 10, 50\}$.  Without using \opt, \sbi is, on average,
1.8$\times$ (and up to 10.6$\times$) faster than LEMP and  $>
10\times$ faster than both FEXIPRO-SI and FEXIPRO-SIR. Head to head, \sbi is
faster than LEMP 67\% of the 92 model/top-$K$ combinations. Compared to
FEXIPRO-SI, \sbi is faster for every single combination in our benchmarks
except for one: KDD-NOMAD, $f=50$, $K=5$.

These results stand in contrast to ~\cite{fexipro}, which reports
significant speedups over LEMP. We have presented and discussed these
results with the authors of both LEMP and FEXIPRO and, after conferring with them,
believe these differences are due to several factors. First, unlike
LEMP and \simdex, FEXIPRO is optimized for the point query setting, in
which a single user's top $K$ is queried at a time. Thus, FEXIPRO does
not take advantage of hardware blocking, which could yield additional
performance benefits. In addition,~\cite{fexipro} reports runtime
relative to a custom implementation of LEMP, as opposed to the LEMP
authors' implementation that we benchmark here. Finally, because the
results presented here utilize models that have been regularized
to obtain the lowest test RMSE, with the exception of the
\texttt{*-DSGD} models, these models are not the same as those
appearing in prior publications. 

Compared to \simdexb, \sbi is 2.7$\times$ faster on average, but this speedup is not
present for all models and values of $K$; \simdexb is actually faster in
34.8\% of the 92 model/top-$K$ combinations in our benchmarks.  (Note
that the runtime for \simdexb varies with $K$, due to the time necessary to
traverse the min-heap for all users.) The difference in runtime can also
vary widely between the two: \sbi can be up to 43.4$\times$ faster than
BMM for our reference models, but BMM can also be up to 18.7$\times$
faster than \sbi. This underscores the need for an optimizer that can choose
between \simdexb and \sbi (or other indexes) to achieve the best performance for MIPS.

Finally, note that, between \simdexi, BMM, and LEMP, no
pair of techniques yielded the fastest runtime on all the reference models.
LEMP is fastest on 11 of the 92 combinations, \simdexb is fastest on 53 of
them, and \simdexi is fastest on the remaining 28. This indicates that a
three-way optimizer that decides between LEMP, \simdexi, \emph{and}
BMM is ideal. We investigate this question in Section~\ref{sec:optimizer_eval}.

\minihead{Multi-core Experiments}
In addition to these single-threaded benchmarks, we also ran multi-core
experiments to determine whether or not \simdexi and prior MIPS indexes would
scale with increased parallelism. While \simdexb is trivial to parallelize,
parallelizing the index structures of each respective system may not be as
straightforward. The results are summarized in
Figure~\ref{fig:parallel_experiment}: both \simdexi and LEMP achieved a
near-linear speedup as we increase the number of cores from 1 to 16. Because
both indexes are read-only, a simple partitioning scheme across users proves to
be an effective parallelization strategy.

\begin{figure}[ht!]
  \centering
  \includegraphics[width=0.95\columnwidth]{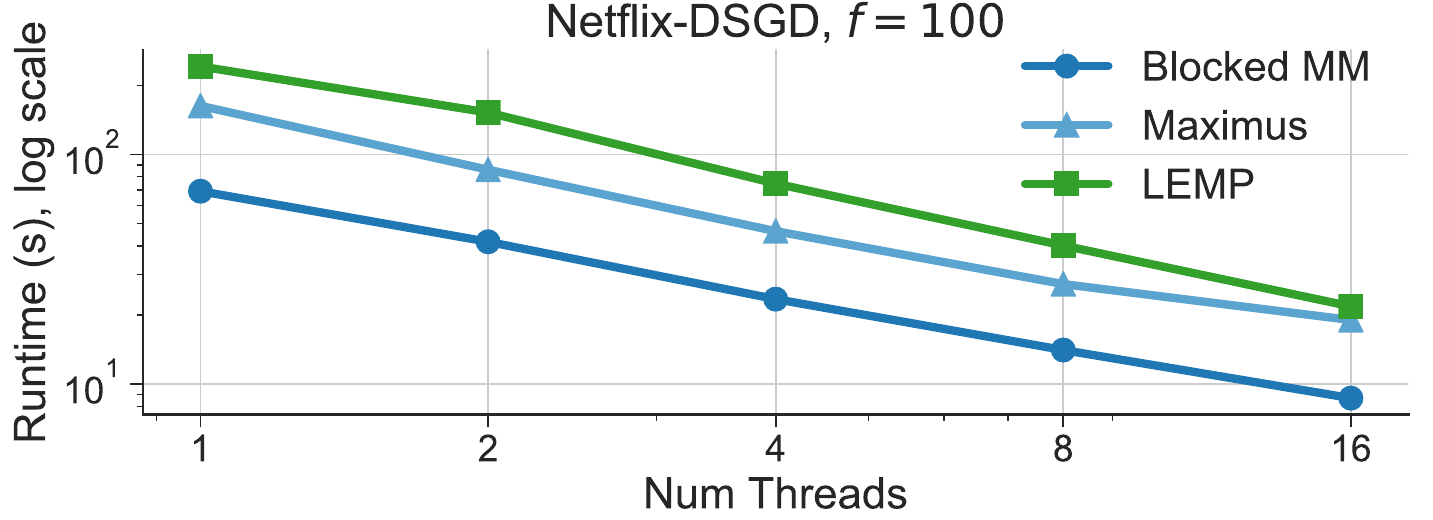}
  \vspace{-1.0em}
  \caption{End-to-end runtime of $K=1$ query benchmarked on \simdexb,
  \simdexi, and LEMP across multiple cores. (FEXIPRO did not provide a
  multi-core implementation.) Both \simdexi and LEMP achieve near-linear
  speedups as we increase the number of cores.}
  \vspace{-0.5em}
  \label{fig:parallel_experiment}
\end{figure}

\subsection{\opt Efficacy and Overhead}
\label{sec:optimizer_eval}

\begin{table*}[]
  \scriptsize
\centering
\caption{Effectiveness of online optimizer on reference models}
  \begin{tabular}{@{}lllllll@{}}
\toprule
                  &          &               &                            & \multicolumn{3}{c}{Avg. speedup versus LEMP-only Baseline}   \\
Optimizer Choices & Accuracy & Avg. Overhead & Std. Dev. Overhead         & Index Only   & \opt (w/ overhead) & Oracle (no overhead) \\ \midrule
BMM + LEMP         & 89.1\%   & 4.3\%         & \multicolumn{1}{l|}{4.2\%} & 1$\times$    & 2.81$\times$           & 3.08$\times$         \\
BMM + FEXIPRO-SI   & 97.8\%   & 6.4\%         & \multicolumn{1}{l|}{8.1\%} & 0.50$\times$ & 2.60$\times$           & 2.93$\times$         \\
BMM + FEXIPRO-SIR  & 97.8\%   & 6.4\%         & \multicolumn{1}{l|}{7.8\%} & 0.43$\times$ & 2.56$\times$           & 2.88$\times$         \\
BMM + \simdexi           & 93.5\%   & 5.5\%         & \multicolumn{1}{l|}{5.9\%} & 1.78$\times$ & 3.15$\times$           & 3.43$\times$         \\
BMM + LEMP + \simdexi    & 84.8\%   & 9.1\%         & \multicolumn{1}{l|}{8.4\%} & -            & 2.99$\times$           & 3.48$\times$         \\ \bottomrule
\end{tabular}
\label{table:optimizer}
\squeezeup
\end{table*}

To demonstrate the effectiveness and generality of \opt, we combine it with
each of the three index techniques---\simdexi, LEMP, and FEXIPRO-SI/SIR---to
choose between the given index and BMM on the 92 model/top-$K$
combinations benchmarked in Figure~\ref{fig:bar_charts}.  We also include an
additional experiment in which we pair \opt with both \simdexi and LEMP, thus
using \opt to perform a three-way optimization.

To effectively compare \opt's runtime for a target model and $K$, we
consider two baselines: first, we normalize our results by the runtime
of the LEMP index only. Second, we consider the runtime resulting from
consulting an oracle optimizer that always chooses the fastest
strategy without incurring any runtime overhead. Table~\ref{table:optimizer}
summarizes the results of these experiments: for each two-way indexing
strategy, \opt improves the \topK runtime and is within 8.3\% of the
optimal runtime efficiency compared to the oracle (average: 5.1\%,
including overhead). To observe the benefits of \opt, consider
FEXIPRO-SIR, the slowest index that we measure in our
benchmarks. Without \opt, FEXIPRO-SIR is, on average, 2.3$\times$
slower than LEMP on our reference models; however, with \opt, it
becomes 2.6$\times$ faster than LEMP, within 11\% of the
maximum improvement that an oracle would provide. The
other indexes exhibit similar trends, including \simdexi, which becomes
3.2$\times$ faster than LEMP---within 9\% of the max---when paired
with \simdexb using \opt.

The Accuracy measurement in Table~\ref{table:optimizer} refers to
\opt's classification accuracy: how often does it choose the fastest
serving strategy?  For FEXIPRO and \simdexi, \opt effectively makes
the correct choice with greater than 93\% accuracy: in the case of
\simdexi, only six model/top-$K$ combinations are misclassified,
while only two are misclassified for FEXIPRO-SI/SIR. These
misclassifications do not lead to a significant runtime penalty: for
the six misclassified examples for \simdexi, the runtimes of
\simdexi and \simdexb are within a few seconds.

However, the accuracy for LEMP is less impressive: 10 of the 92
combinations are misclassified. To understand why this was the case,
we ran an experiment in which we varied the user sample ratio used in
\opt and measured the estimated runtime of the serving strategy
(either \simdexb or an index) based on that sample. We measured the
variance of the overall estimates for all of the indexes and BMM.

\begin{figure}[t!]
  \centering
  \includegraphics[width=0.95\columnwidth]{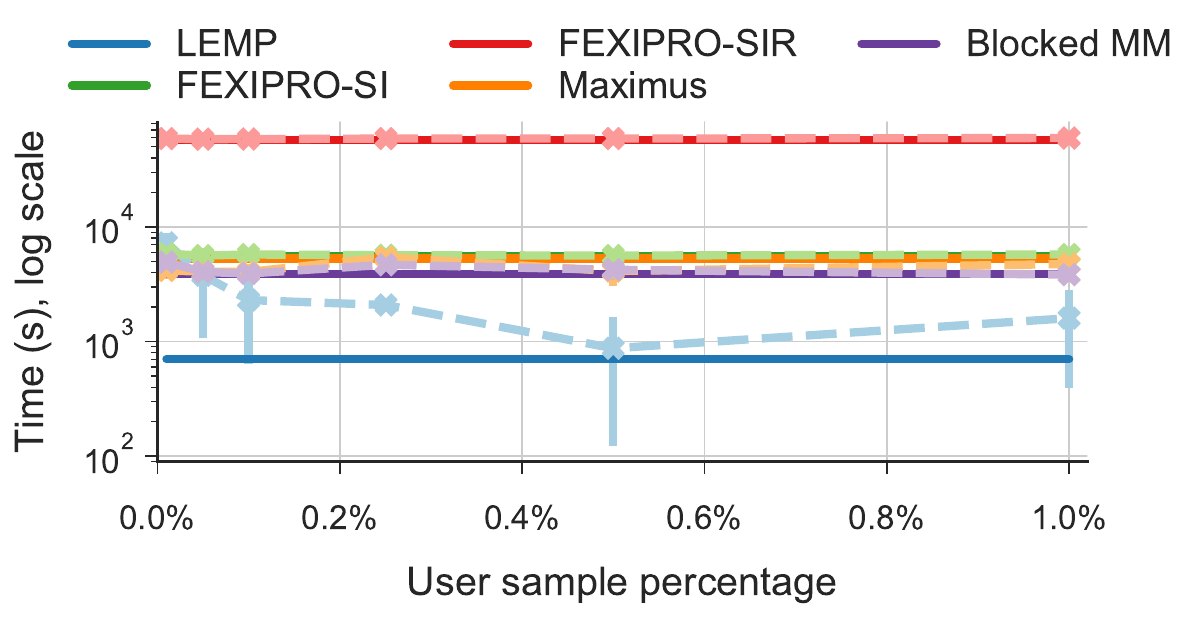}
  \vspace{-1.0em}
  \caption{\opt runtime estimates for all indexes on KDD-REF, $f=51$, $K=1$.
    Solid lines represent the true runtimes; dashed lines denote the estimates.
    Estimates are averaged over four runs; error bars
    represent the standard deviation. Despite the high variance of the
    estimates for LEMP, \opt makes the correct choice with a small ($<
    1\%$) sample of users.}
  \vspace{-1.0em}
    \label{fig:runtime_estimate}
\end{figure}

Figure~\ref{fig:runtime_estimate} summarizes the results of our
experiment on KDD-REF, $f=51$, $K=1$ for five distinct sample ratios
ranging logarithmically from 0.01\% to 1\%. As depicted, \opt's user
sampling technique is robust and exhibits relatively low variance for \simdexi,
BMM, and FEXIPRO, but the estimated runtimes for LEMP have
much higher variance. This is because LEMP performs runtime
adaptation of its serving strategy, and two separate samples of users
may result in different pruning strategies (i.e., coordinate-wise
versus $L_2$-wise pruning; see Section~\ref{sec:background-indexes}). For this
particular model and choice of $K$, \opt is still able to make the
right decision, since the estimated runtime for LEMP never exceeds the
runtime estimate (or the true runtime) for BMM. However, for
other models, this is not always the case, which explains why
\opt's accuracy with LEMP is lower. Nevertheless, combining \opt with
LEMP is still beneficial, yielding a 2.8$\times$ speedup that is
within 9\% of the speedup obtained by the oracle.

The high variance exhibited by LEMP in our experiments also
demonstrates why the three-way optimizer in the bottom row of the
table---BMM + LEMP + \simdexi---actually achieves a smaller overall
speedup (3$\times$) than BMM + \simdexi (3.2$\times$). In addition
to the slight reduction in accuracy (84.8\%), \opt begins to incur a
higher runtime overhead---9.1\% on average---since it now has to
construct and query multiple indexes. For this three-way comparison
\opt's speedup is within 15\% of the max possible speedup,
3.48$\times$.

We were initially surprised to find that absolute optimizer accuracy
was not a robust signal of end-to-end speedup. The three-way optimizer
demonstrates this well: with 84.8\% accuracy, \opt is still within
15\% of the optimal speedup on average. The primary reason for this
phenomenon is that, in the cases where \opt chooses a sub-optimal
strategy, the margin of difference is often small. For example, in
R2-NOMAD $f = 50,\ K=1$ (Figure~\ref{fig:bar_charts}), \simdex and
LEMP are remarkably close---within 12\%---while each index is
considerably faster than BMM---a factor of
3.75$\times$. Thus, unless \opt's runtime estimate is off by more than
3.75$\times$, choosing the slower index has limited impact on this
model. This result suggests that even coarse-grained runtime estimates
are useful in accelerating these indexing workloads.

\minihead{Runtime Analysis of \simdex and Item Blocking Lesion Study} Finally, to
better understand the impact of each stage of \simdex's execution, we measured
the running time of each component. On average, \simdex incurs an overhead of
1.8\% for clustering, constructing its index, and performing cost estimation.
We illustrate a breakdown for Netflix-NOMAD, $f=50$ and R2-NOMAD, $f=50$ in
Figure~\ref{fig:factor_analysis}, both of which use \simdex's index; for
Netflix, \simdex spends 0.79 seconds in the first three stages, and over 43
seconds in the final stage when computing predictions. We also
illustrate the effect of hardware-efficient item blocking, which delivers
2.4$\times$ and 1.4$\times$ speedups for these datasets; the effect for Netflix
is more pronounced because the average number of items visited in the index for
each user (i.e., $\bar{w}$) is larger than in Yahoo R2. Sharing a single, small
blocked matrix multiply at the start of index traversal allows \simdexi to
benefit from hardware-efficient BLAS. Overall, \simdex's overheads are small,
especially relative to the speedups that \opt enables by
choosing between \simdexi and \simdexb.

\begin{figure}[t!]
  \centering
  \includegraphics[width=0.95\columnwidth]{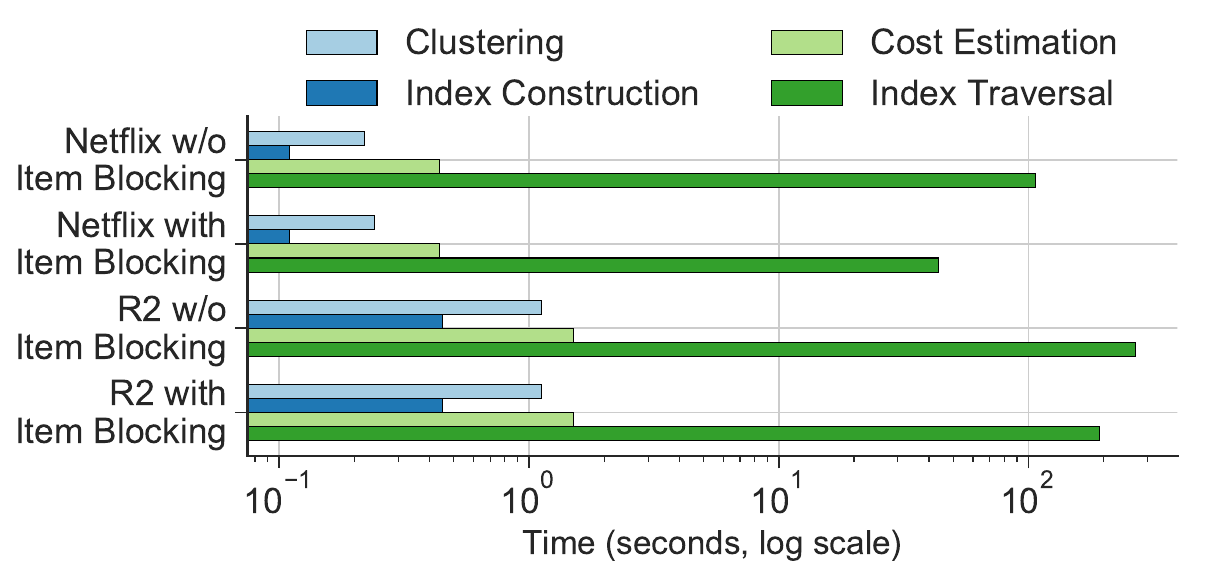}
  \vspace{-1.2em}
  \caption{Runtime breakdown of \simdex for $K=1$ on Netflix-NOMAD,
  $f=50$ and R2-NOMAD, $f=50$. Enabling item blocking improves throughput by
  2.4$\times$ and 1.4$\times$, respectively.}
  \squeezeup
  \label{fig:factor_analysis}
\end{figure}

\section{Related Work}
\label{sec:related_work}



\minihead{State of the Art: LEMP and FEXIPRO}
LEMP~\cite{teflioudi2015lemp} (previously described in
Section~\ref{sec:background-indexes}) and FEXIPRO~\cite{fexipro} are the
closest related work in exact MIPS, which we build upon and evaluate against in
this work. Here, we first summarize the key contributions of FEXIPRO, then
discuss key differences between these two indexes and our work.

FEXIPRO~\cite{fexipro} leverages three pruning strategies for the \topk MIPS
problem: \emph{i)} applying singular value decomposition to the input to only
compute partial inner products, \emph{ii)} applying input quantization to
perform integer-based operations only, and \emph{iii)} applying a non-negative
transformation on the input to ensure monotonicity, further promoting pruning.
In their paper, the FEXIPRO authors report that these three strategies combined
together achieve an order-of-magnitude speedup over a custom point-query
implementation of LEMP. As discussed in Section~\ref{sec:bakeoff}, our results
differ on the models we evaluated.

There are several differences between the prior work on MIPS serving and ours.
First, both LEMP and FEXIPRO do not consider bypassing the index search
altogether to use blocked matrix multiply, which we show is often better than
their performance. Neither LEMP nor FEXIPRO considers BMM as a strategy when
scoring a group of users, and their data structures are not designed to enable
BMM during index traversal.  While LEMP samples users to select the best
retrieval algorithm per bucket, \opt's sampling strategy answers a much coarser
granularity question: should a given model be indexed at all? To minimize the
overhead of sampling, we apply an incremental t-test for early stopping,
another difference between our approach and LEMP's. Lastly, both LEMP and
FEXIPRO build an index on the items, rather than on the users.

\minihead{User Clustering and Other Indexes} Several alternative methods
propose tree-based indexes for MIPS.  As discussed in Section~\ref{sec:background-indexes},
Koenigstein et al.~\cite{koenigstein2012efficient} introduce a method for
computing \emph{approximate} \topK recommendations via user clustering. In
their method, the user vectors are clustered using spherical clustering and the
top $K$ items are pre-computed for each cluster---the \topK for a given user is
the \topK of the cluster the user belongs to. In their analysis, they provide the
bound we use in Equation~\ref{eq:upper_bound} in Section~\ref{sec:index}, but
they use it for the purpose of \emph{approximate}---not exact---\topK. In
\simdex, we show how to utilize this bound to build an efficient index for
exact MIPS.

Ram and Gray~\cite{ram2012maximum} present three different techniques for
indexing item vectors: single-ball trees, dual-ball trees, and cone trees. All
three data structures share a similar strategy: they recursively subdivide the
metric space of items into hyperspheres. Every node in the tree represents a
set of points, and each node is indexed with a center and a ball enclosing all
the points in the node. Of the three, the cone tree offers the fastest
speedups.  Follow-on work by Curtin et al.~\cite{curtin} extends this method
using cover trees~\cite{covertree}, but Teflioudi et
al.~\cite{teflioudi2015lemp} show that these methods are slower than LEMP.


\minihead{Approximate MIPS}
A large body of work considers the approximate setting for MIPS, whereby search
procedures attain approximations of the true top $K$.  Shrivastava et
al.~\cite{shrivastava2014asymmetric} use asymmetric hash functions in their LSH
subroutine, which reduce the approximate MIPS problem to a sublinear
nearest-neighbor search. Similarly, Bachrach et al.~\cite{bachrach2014speeding}
also reduce MIPS to NNS using a novel Euclidean transformation,
which allows users to trade off \topK accuracy for better performance. Our
focus in this work is the exact setting---delivering the most accurate
predictions possible with the fastest speed.

\minihead{Model Serving} Model serving systems are of increasing practical
importance in the database community~\cite{kumarsurvey,baylor2017tfx}.
Prior systems, such as \textsc{TuPAC}~\cite{sparks2015automating},
have shown that dense matrix multiply is an effective approach for
model/parameter search, while other systems, such as
Clipper~\cite{crankshaw2017clipper}, demonstrate that adaptive batching can
lead to performance wins in online serving.  In our work, we show that,
\emph{for certain models}, blocked matrix multiply is actually faster than
state-of-the-art MIPS indexes, which prune significant computation but do not
take advantage of hardware efficiency. Thus, we introduce a new MIPS index,
\simdex, that takes advantage of high-performance BLAS libraries while also
pruning computation.  Because the best serving strategy varies from model to
model, we introduce \opt, which determines the best serving strategy using an
online, sampling-based approach.



\section{Conclusion}
\label{sec:conclusion}

MIPS is a critical component of many modern workloads, including recommender
systems and information extraction tasks. In this work, we show that the
fastest of today's indexes do not always outperform blocked matrix multiply. We
thus propose \simdex, a simple but efficient indexing scheme that leverages
linear algebra kernels to gain hardware efficiency while also pruning
computation.  In addition, we design \opt, a system that can efficiently choose
between using an index and blocked matrix multiply.  Together, \opt and \simdex
achieve speedups of 3.2$\times$ on average, and up to 10.9$\times$, on popular
word embeddings and well-tuned models from recommendation datasets.

\section{Acknowledgments}
This research was supported in part by affiliate members and other supporters
of the Stanford DAWN project---Ant Financial, Facebook, Google, Infosys, Intel,
Microsoft, NEC, SAP, Teradata, and VMware---as well as Toyota Research
Institute, Keysight Technologies, Hitachi, Northrop Grumman, Amazon Web
Services, Juniper Networks, NetApp, and the NSF under CAREER grant CNS-1651570.

\bibliographystyle{abbrv}
\bibliography{refs}

\end{document}